\newtheorem{theorem}{Theorem}
\newtheorem{assumption}{Assumption}[section]
\newtheorem{remark}{Remark}[section]
\newenvironment{proof}[1][Proof]{\begin{trivlist}
\item[\hskip \labelsep {\bfseries #1}]}{\end{trivlist}}
\newcommand{\qed}{\nobreak \ifvmode \relax \else
      \ifdim\lastskip<1.5em \hskip-\lastskip
      \hskip1.5em plus0em minus0.5em \fi \nobreak
      \vrule height0.75em width0.5em depth0.25em\fi}
\newcommand{\R}{\mathbb{R}}
\newcommand{\N}{\mathbb{N}}
\DeclareMathOperator*{\argmin}{arg\,min}
\newcommand{\xmax}{\overline{x}}
\newcommand{\xmin}{\underline{x}}
\newcommand{\ymax}{\overline{y}}
\newcommand{\ymin}{\underline{y}}
\newcommand{\zmax}{\overline{z}}
\newcommand{\zmin}{\underline{z}}
\newcommand{\adset}{\ensuremath{\mathcal S_{D}}}
\newcommand{\adsetxz}{\ensuremath{\mathcal S_{D_{xz}}}}
\newcommand{\adsety}{\ensuremath{\mathcal S_{D_{y}}}}
\newcommand{\I}{\mathbb{I}}
\newcommand{\figref}[1]{Fig.~\ref{#1}}
\newtheorem{rem}{Remark}
\title{Event-based Impulsive Control for Spacecraft Rendezvous Hovering Phases}
\author{Julio C. Sanchez\footnote{Ph.D. Candidate, Department of Aerospace Engineering, Higher School of Engineering; jsanchezm@us.es.}}
\affil{University of Seville, 41092, Seville, Spain}
\author{Christophe Louembet\footnote{Associate professor, Methods and Algorithms in Control; currently University of Toulouse, Paul Sabatier University, Laboratory of Analysis and Architecture of Systems; louembet@laas.fr.}}
\affil{LAAS-CNRS, University of Toulouse, CNRS, 31400, Toulouse, France}
\author{Francisco Gavilan \footnote{Assistant professor, Department of Aerospace Engineering, Higher School of Engineering; fgavilan@us.es} and Rafael Vazquez \footnote{Associate professor, Department of Aerospace Engineering, Higher School of Engineering; rvazquez1@us.es}}
\affil{University of Seville, 41092, Seville, Spain}
\begin{document}
	
\maketitle

\begin{abstract}
This work presents an event-triggered controller for spacecraft rendezvous hovering phases. The goal is to maintain the chaser within a bounded region with respect to the target. The main assumption is that the chaser vehicle has impulsive thrusters. These are assumed to be orientable at any direction and are constrained by dead-zone and saturation bounds. The event-based controller relies on trigger rules deciding when a suitable control law is applied. The prime control law consists on a single impulse, therefore the trigger rules design is based on the instantaneous reachability to the admissible set. The final outcome is a very efficient algorithm from both computational burden and footprint perspectives. Since the proposed methodology is based on a single impulse control, the controller invariance is local and assessed through  impulsive systems theory. Finally, numerical results are shown and discussed.
\end{abstract}

\section*{Nomenclature}

{\renewcommand\arraystretch{1.0}
	\noindent\begin{longtable*}{@{}l @{\quad=\quad} l@{}}
		$a$ & semi-major axis\\
		$A,~B$ & state and control matrices for Cartesian relative dynamics \\
		$A_D,~B_D$ & state and control matrices for relative orbit element dynamics \\
		$D$ & relative orbit element state \\
		$\mathcal{D}$ & admissible set region of attraction\\
		$\mathcal{D}_{\text{dz}}$ & dead-zone set\\
		$e$ & eccentricity \\
		$\mathcal{F}_{xz},~\mathcal{F}_y$ & state increment reachable set\\
		$G_{xz},~G_y$ & instantaneous reachability indicators\\
		$g_w(\cdot)$ & multivariate polynomials in $D$ states\\
		$L(\cdot)$ & length of a given interval\\
		$\adset$ & admissible set \\
		$t$ & time \\
		$U,~V$ & similarity transformation matrices\\
		$\xmin,~\xmax,~\ymin,~\ymax,~\zmin,~\zmax$ & polytopic constraints bounds\\
		$\textit{\textbf{X}}$ & Cartesian relative state \\
		$\mathcal Z$ & jump set\\
		$\delta_{xz},~\delta_y$ & trigger rules threshold\\
		$\Delta D$ & relative orbit element state increment\\
		$\Delta V$ & impulse\\
		$\underline{\Delta V}$ & dead-zone value\\
		$\overline{\Delta V}$ & saturation value\\
		$\Delta^+$ & instantaneous reachable set\\
		$\Delta^+_{\text{sat}}$ & instantaneous reachable set accounting for control limitations\\
		$\Delta^S$ & intersection set between admissible set and instantaneous reachable set\\
		$\lambda_{xz},~\lambda_y$ & control variables\\
		$\Lambda^S$ & set of admissible control variables with respect to \adset{}\\
		$\Lambda_{\text{sat}}$ & set of admissible control variables with respect to control limitations\\
		$\Lambda_{\text{sat}}^S$ & set of admissible control variables with respect to \adset{} and control limitations\\
		$\mu$ & Earth gravitational constant\\
		$\nu$ & true anomaly \\
		$\Phi(\nu,\nu_0)$ & state transition matrix from $\nu_0$ to $\nu$\\
		\multicolumn{2}{@{}l}{Subscripts}\\
		$xz$ & indicates in-plane motion\\
		$y$ & indicates out-of-plane motion\\
		$\text{dz}$ & indicates dead-zone constraint\\
		$\text{sat}$ & indicates dead-zone and saturation constraints\\
		\multicolumn{2}{@{}l}{Supercripts}\\
		+ & indicates a state after control
\end{longtable*}}

\section{Introduction}

Spacecraft rendezvous operations have played a key role in the space exploration, see \cite{Woffinden2007} for an historical review. In fact, new mission concepts have arisen with the increasing popularity of CubeSats in both industry and academia, see \cite{Shkolnik2018}. For example, the novel mission² concept proposed by \cite{Underwood2015}, plans to design an autonomous on-orbit reconfigurable telescope. Thus, the ability of a spacecraft to compute control commands on-board is a key component for novel rendezvous operations. 

This paper is focused on the hovering phase of rendezvous mission, see \cite{Irvin2009}. 
During the rendezvous mission, this phase permits a pause useful for observing the target satellite  or waiting for order to continue the mission in safe conditions.
A critical feature of the hovering phase to be ensured by the control algorithm is the safety of the mission that mainly consists in avoiding collision while waiting in the vicinity of the leader spacecraft.
To achieve such goals, the chaser's relative position should be maintained within a restricted zone away from the target with a minimal fuel consumption. This hovering region has to be defined in a local frame attached to the leader position while it should exclude the leader spacecraft for safety reasons.

Regarding propulsion devices, chemical thusters (for large and heavy spacecraft) and cold gas thrusters (for lightweight spacecraft) are being employed nowadays for geocentric space proximity operations, \cite{DiMauro2018}. In both cases, the control signal can be modelled in an impulsive way with adequate accuracy. Additionally, the impulse amplitude is constrained not only by saturation but by the minimum impulse bit (minimum force that the thruster can apply). This dead-zone constraint makes the rendezvous planning problem non-convex and difficult to tackle by conventional methods, see \cite{Hartley2013_bis}. A vast literature on impulsive rendezvous exists. References \cite{Breger2008, Yazhong2014} designed passive collision avoidance trajectories, \cite{Yazhong2007} studied the trade-off of several rendezvous performance indexes (fuel consumption, flight time and safety), \cite{Yang2015} developed a switching control strategy guaranteeing close-loop stability and \cite{Sanchez2019} designed a six-degrees of freedom controller based on flatness theory. Some relevant works on impulsive control can also be found in the field of spacecraft flight formation, see \cite{Qi2012, Gaias2015}. 

Regarding the hovering phase, strategies such as the "pogo" or "teardrop" approaches have been proposed in \cite{Irvin2008, Irvin2009}. These approaches are fuel inefficient as the satellite remains in the hovering zone by means of a systematic impulsive control which implements a bouncing-on-the-zone-frontiers strategy. On the contrary, in \cite{Arantes2019} the relative orbits periodicity property is exploited to maintain the chaser spacecraft in the  hovering zone without control effort in absence of disturbances.
In this aim, the relative motion invariants are used as parameters of the relative orbits in the framework of linearized relative motion accordingly with \cite{Georgia2013, Bennett2016}. Using \cite{Georgia2013} convenient parameterization,  \cite{Deaconu2012} employed polynomial positivity techniques to compute the coordinates of natural periodic relative orbits within a polytopic region. This methodology has also been extended to non-periodic relative orbits, see \cite{Deaconu2015}. However, the LMI conditions modelling the set membership constraints for a relative orbit have been demonstrated to be numerically cumbersome for spacecraft on-board computation devices. For this reason, in \cite{Arantes2017}, an implicitization method was employed to formally describe the admissible set of constrained orbits as polynomial inequalities in terms of the relative state. Finally, \cite{Arantes2019} designed a global stability controller, based on a three impulsive sequence, for the previously described admissible set.     

Rendezvous impulsive control has been studied in the last decade using different approaches such as the MPC framework (see \cite{Arantes2019} and references therein) or the hybrid systems framework \cite{Brentari2018}. 
The controllers from the literature deliver periodically the control impulse to be executed. Although stability has been demonstrated in such frameworks, it appears that most of the controls are \emph{filtered} by the thrusters limitations (the minimum impulse bit in particular). This leads to a slight degradation of the controller performances or even to critical situations closed to instability. In addition, it could be desirable to reduce control computations \cite{Heemels2012}.
Under this new requirement, the event-based control methodology has emerged as an alternative to classical periodic controllers. In this control paradigm, the commands are computed aperiodically, thus reducing communications between sensors, controllers and actuators, see \cite{Astrom2008} for the basics. The event-based methodology can be effectively combined with feedback policies, see \cite{Wu2014}, or MPC schemes as in \cite{Pawlowski2015}. More specifically, event-based control is gaining momentum amongst the spacecraft control community with  attitude control applications, see \cite{Wu2018, Zhang2018}.

The main purpose of this paper is to extend and complete the preliminary results presented in the conference papers of \cite{Louembet2018, Sanchez2019_bis} which aimed to overcome some drawbacks of the global stabilizing controller proposed by \cite{Arantes2019}. 
In particular, using \cite{Arantes2019} global controller, the set membership conditions are only ensured after the $N$-impulse maneuver ($N\ge 3$). Moreover, unnecessary impulses may be commanded when the spacecraft is close enough to the hovering region. To address these issues, an event-based predictive controller for spacecraft rendezvous hovering phases is designed in this work. Since the proposed controller is local, the main assumption is that the chaser vehicle is already in the hovering region vicinity. 

The key components of the proposed event-based predictive controller are, on the one hand, a set of rules triggering the control decision based on target set proximity and reachability criteria. On the other hand, the computation of a single control impulse is the second leg of this controller. An advantage of the resulting event-based algorithm is the low level of numerical complexity which leads to a reduced computation burden. As a matter of fact, target set proximity and reachability conditions rely on univariate polynomial roots computation whereas the single control impulse computation boils down to a low dimension linear program. Finally, using impulsive systems theory, see \cite{Haddad2006}, the primal single impulse control invariance, in a local way, is assessed.           

This paper is structured as follows. Section II introduces the impulsive relative motion model and the algebraic description of constrained orbits. Section III presents the event-based predictive controller composed by the control law and trigger rules. Section IV assess the invariance of the primal single impulse approach. Section V proposes and analyses numerical results of interest. Finally, Section VI concludes the paper with some remarks.

\section{Relative motion and constrained orbits}

In this section, the relative motion model between two spacecraft on Keplerian orbits is presented and parameterized in a convenient manner. Then, the set of relative constrained orbits is formally described by means of implicitization techniques. 

\subsection{Relative motion}

In orbit proximity operations the vehicles are considered close as the ratio of the relative distance with respect to the semi-major axis is very small. This fact allows to linearize the relative dynamics around the target orbit. If the relative dynamics model only includes Keplerian effects, the classic Tschauner-Hempel linear time-varying (LTV) system \cite{Tschauner1967} can be employed. For this particular LTV model, a formal fundamental matrix is described in the literature and a given transition matrix was proposed in \cite{Yamanaka2002}.

\paragraph{Relative dynamics and state transition}

The chaser spacecraft relative motion, denoted by $S_f$, is expressed with respect to the local frame attached to a passive target with its position denoted by $S_l$.
The local frame $\{ S_l,\vec{x},\vec{y},\vec{z} \}$, denoted as Local Vertical/Local Horizontal (LVLH), evolves around the Earth-centered inertial frame, $\{O, \vec{I},\vec{J},\vec{K} \}$, along the target spacecraft orbit. Note that $\vec{z}$ is the radial vector (positive towards the centre of the Earth), $\vec{y}$ is the cross-track vector (opposite to the orbit angular momentum) and $\vec{x}$ is the in-track vector completing a right-handed system, see Fig.\ref{fig:LVLH_frame}.

\begin{figure}[h] 
	\begin{center}
		\includegraphics[width=9.5cm,height=9.5cm,keepaspectratio]{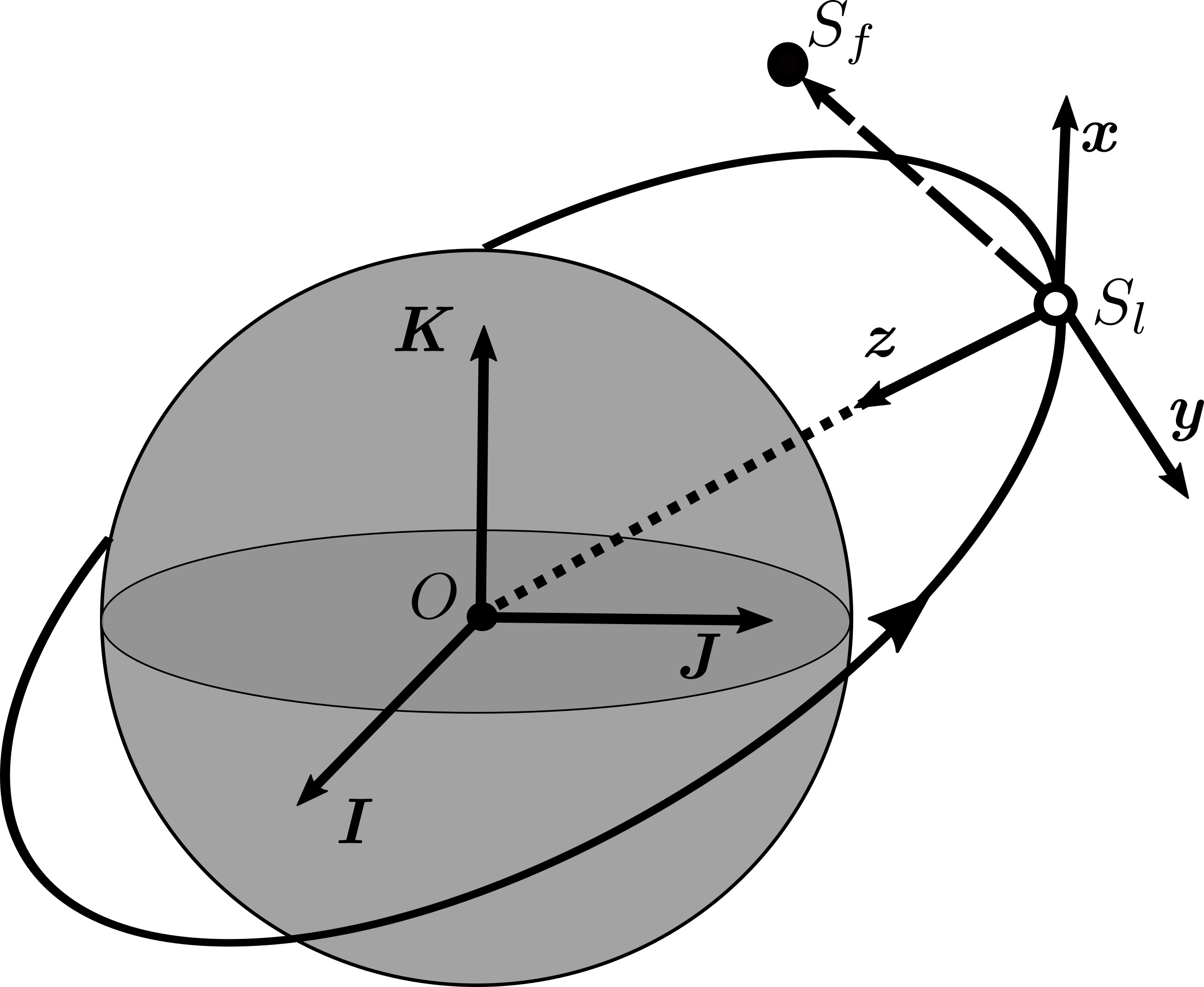}
	\end{center}
	\caption{Inertial Earth-centered and LVLH frames.}	
	\label{fig:LVLH_frame}
\end{figure}

Under Keplerian assumptions, the relative motion between two spacecrafts can be expressed by means of the Tschauner-Hempel equations, see \cite{Tschauner1967}. Considering the vehicles are close enough, i.e. $ \| \overrightarrow{OS_l}\|_2 \gg \|\overrightarrow{S_lS_f}\|_2$, these equations can be linearized to obtain the following linear time-varying dynamics
\begin{equation}\label{DynLinT}
\dot X(t)=A(t)X(t),
\end{equation} 
where the state vector $X$ represents the relative position and velocity in the LVLH frame $X(t)=[x(t),~y(t),~z(t),~\dot x(t),~\dot y(t),~\dot z(t)]^T$. In this work, the transition matrix of Eq.\eqref{DynLinT} dynamics is exploited. To obtain this matrix, a similarity transformation is applied
\begin{equation}\label{eq:space2tilde}
\tilde X(\nu)= U(\nu) X(t), \> \> \text{with} \> \> U(\nu) = \begin{bmatrix}
\rho\mathbb{I}_3 & 0_3\\
\rho' \, \mathbb{I}_3  & \left(k^2 \rho\right)^{-1} \mathbb{I}_3
\end{bmatrix},
\end{equation} where $(\cdot)'=\frac{d(\cdot)}{d\nu}$, $k^2=\sqrt{\frac{\mu}{a^3(1-e^2)^3}}$, $\rho=1+e\cos\nu$ and $\mathbb{I}_3$ denotes the identity matrix. The variables $a$ and $e$ are the target orbit semi-major axis and eccentricity respectively. The similarity transformation of Eq.\eqref{eq:space2tilde} is a change of the independent variable from time $t$ to true anomaly of the target spacecraft, $\nu$, which defines its position through the orbit. In this framework, the transition matrix can be analytically obtained (see \cite{Yamanaka2002}) so that
\begin{equation}
\tilde{X}(\nu)=\Phi(\nu,\nu_0)\tilde{X}(\nu_0),\quad \nu_0\le\nu.
\end{equation}

\paragraph{Parameterizing the relative motion}

In \cite{Deaconu2012}, the relative position has been explicitly expressed in a convenient manner as
\begin{equation}\label{position_transition}
\begin{aligned}
\tilde{x}(\nu)&=d_1(1+\rho)s_\nu-d_2(1+\rho)c_\nu+d_3+3d_0J(\nu)\rho^2,\\
\tilde{y}(\nu)&=d_4c_\nu+d_5s_\nu,\\
\tilde{z}(\nu)&=d_1\rho c_\nu+d_2\rho s_\nu-3ed_0J(\nu)s_\nu \rho+2d_0,\\
\end{aligned}
\end{equation}
where $s_{\nu}=\sin{\nu}$, $c_{\nu}=\cos{\nu}$ and $J(\nu)$ is given by
\begin{equation}\label{eq:J}
J(\nu) := \int_{\nu_0}^{\nu} \frac{d\tau}{\rho(\tau)^2} = \sqrt{\frac{\mu}{a^3}}\frac{t-t_0}{(1-e^2)^{3/2}}.
\end{equation}
As the classic orbital elements, the parameters $d_i~(i=0\hdots5)$ describe the chaser relative orbit in terms of its instantaneous shape and position, see \cite[chap. 2]{Georgia2013}. This fact makes the vector $D=[d_0,~d_1,~d_2,~d_3,~d_4,~d_5]^T$ a relevant state when aiming to constrain relative orbits. Note that a linear transformation, between the relative state $\tilde{X}$ and the vector $D$, exists as
\begin{equation}\label{D2X}
\tilde{X}(\nu)=V(\nu)D(\nu),
\end{equation}
with 
\begin{equation*}\label{eq:def_D_mat}
V(\nu) = \begin{bmatrix}
0 & s_{\nu}(1+\rho) & -c_{\nu}(1+\rho) & 1 & 0 & 0\\
0 & 0 & 0 & 0 & c_{\nu} & s_{\nu}\\
2 & c_{\nu}\rho & s_{\nu}\rho & 0 & 0 & 0\\
3 & 2c_{\nu}\rho-e & 2s_{\nu}\rho & 0 & 0 & 0\\
0 & 0 & 0 & 0 & -s_{\nu} & c_{\nu}\\
-\dfrac{3es_{\nu}}{\rho} & -s_{\nu}(1+\rho) & 2ec_{\nu}^2-e+c_{\nu} & 0 & 0 & 0\\
\end{bmatrix}.
\end{equation*}
Since $\det V = 1-e^2 \neq 0, ~\forall e \in  [0, 1)$, Eq.\eqref{D2X} represents a similarity transformation and $D$ is a proper state vector with its own dynamics governed by ${A_D(\nu)}$
\begin{equation}\label{eq:D_dyn}
D'(\nu)=\underbrace{
	\begin{bmatrix}
	0 & 0 & 0 & 0 & 0 & 0\\
	0 & 0 & 0 & 0 & 0 & 0\\[1 em]
	-\tfrac{3e}{\rho^{2}} & 0 & 0 & 0 & 0 & 0\\[1 em]
	\tfrac{3}{\rho^{2}} & 0 & 0 & 0 & 0 & 0\\
	0 & 0 & 0 & 0 & 0 & 0\\
	0 & 0 & 0 & 0 & 0 & 0\\
	\end{bmatrix}}_{A_D(\nu)}
D(\nu),
\end{equation}
and its own transition matrix $\Phi_{D}(\nu,\nu_0)$,
\begin{equation}\label{eq:D_trans_mat}
D(\nu)= \underbrace{ 
	\begin{bmatrix}
	1 & 0 & 0 & 0 & 0 & 0\\
	0 & 1 & 0 & 0 & 0 & 0\\
	-3eJ(\nu) & 0 & 1 & 0& 0 & 0\\
	3J(\nu) & 0 & 0 & 1 & 0 & 0\\
	0 & 0 & 0 & 0 & 1 & 0 \\
	0 & 0 & 0 & 0 & 0 & 1 \\
	\end{bmatrix}}_{\Phi_{D}(\nu,\nu_0)}
D(\nu_0).
\end{equation}

\paragraph{Impulsive control}
Typically, for space hovering operations, the chaser spacecraft is controlled by chemical engines providing a high level of thrust during a short period of time with respect to the target orbit period. In practice, this fact leads to an extremely fast velocity change which can be modelled in an impulsive way
\begin{equation}
X^+(t)=X(t)+B \Delta V(t), \quad B=[0_3,~\I_3]^T,
\end{equation} 
where $0_3$ is the square null matrix. Applying the changes of variable given by Eq.\eqref{eq:space2tilde} and Eq.\eqref{D2X}, an impulse at instant $\nu$ produces an instantaneous change in the state $D$ as
\begin{equation}\label{eq:jumpD}
D^+(\nu)=D(\nu)+B_D(\nu) \Delta V(\nu),
\end{equation}
with
\begin{equation}\label{eq:B_D}
B_D(\nu)=V^{-1}(\nu)U(\nu)B,
\end{equation}
which can be further developed as
\begin{equation}\label{eq:def_BD_mat}
B_D(\nu) = \frac{1}{k^2(e^2-1)\rho}
\begin{bmatrix}
\rho^2 & 0 & -es_{\nu}\rho\\
-2c_{\nu}-e(1+c_{\nu}^2) & 0 & s_{\nu}\rho\\
-s_{\nu}(2+ec_{\nu}) & 0 & 2e-c_{\nu}\rho\\
es_{\nu}(2+ec_{\nu}) & 0 & ec_{\nu}\rho-2\\
0 & -(e^2-1)s_{\nu} & 0\\
0 & (e^2-1)c_{\nu} & 0\\
\end{bmatrix}.
\end{equation}
Equation \eqref{eq:jumpD} shows that a given impulsive control, $\Delta V$, will have a different impact (in the $D$ space) depending on the application instant, $\nu$. This is due to the control matrix $B_D$ time dependence. Additionally, the impulse amplitude has to comply with the following conditions on dead-zone and saturation
\begin{equation}\label{sat_dz_conds}
\underline{\Delta V} \leq \lVert \Delta V \rVert_2 \leq \overline{\Delta V},
\end{equation}
where it is assumed that the thruster can be pointed at any direction.
  
\paragraph{Decoupling the relative motion}
Observing Eq.\eqref{position_transition}, one can notice that the in-plane, $xz$, and out-of-plane, $y$, relative motions are decoupled. These motions are represented by the state sub-vectors $D_{xz}=[d_0,~d_1,~d_2,~d_3]^T$ and $D_y=[d_4,~d_5]^T$ respectively. 
Consequently, the control matrix given by Eq.\eqref{eq:def_BD_mat} can be decomposed into two submatrices for both the in-plane and out-of-plane motions:
\begin{eqnarray}
	B_{D,xz}(\nu)&=&\frac{1}{k^2(e^2-1)\rho}
	\begin{bmatrix}
		\rho^2 & -es_\nu\rho\\
		-2c_\nu-e(1+c_\nu^2) & s_\nu\rho\\
		-s_\nu(2+ec_\nu) & 2e-c_\nu\rho\\
		es_\nu(2+ec_\nu) & ec_\nu\rho - 2
	\end{bmatrix},\label{eq:def_BD_xz}\\
	B_{D,y}(\nu)&=&\frac{1}{k^2(e^2-1)\rho}
	\begin{bmatrix}-(e^2-1)s_\nu\\(e^2-1)c_\nu\end{bmatrix}\label{eq:def_BD_y}.
\end{eqnarray}

\subsection{Constrained orbits}

In this work, the control objective is to maintain the spacecraft hovering inside a predefined polytopic subset of the relative position space. Thereafter, a cuboid is considered without loss of generality:
\begin{equation}\label{eq:constr1}
\xmin \leq  x(t) \leq \xmax, \quad \ymin \leq  y(t) \leq \ymax, \quad \zmin \leq  z(t) \leq \zmax, \quad \forall t \geq t_0. 
\end{equation}
If the target is a space vehicle (it could also just be a reference position), the cuboid $\{\xmin,\xmax,\ymin,\ymax,\zmin,\zmax\}$ should not contain the origin in order to avoid collisions. The most economic way to hover within a given region is that the chaser spacecraft evolves on naturally constrained periodic orbits. In \cite{Deaconu2012}, the periodic orbits are described by the necessary and sufficient periodicity condition, namely $d_0=0$. 
Note that this periodicity condition defines the set of equilibrium points in the state-space $D$ (cf. dynamic equation \eqref{eq:D_dyn}).
Inserting the changes of variables of Eq.\eqref{eq:space2tilde} and Eq.\eqref{D2X} into the polytopic constraints, given by Eq.\eqref{eq:constr1}, one obtains the constraints inequalities expressed in the $D$ space. Therefore, the admissible set $\adset$ can be formally described as
\begin{equation}\label{eq:admset}
\adset:=\left\lbrace
\begin{array}{c}
D \in \mathbb{R}^6
\end{array}
\, \middle| \,
\begin{array}{l}
d_0=0,
\end{array}
\,\,
\begin{array}{l}
\xmin \leq  V_x(\nu) D \leq \xmax \\
\ymin \leq  V_y(\nu) D \leq \ymax \\
\zmin \leq  V_z(\nu) D \leq \zmax 
\end{array}, \> \> \forall \nu \right\rbrace,
\end{equation}
where $V_x$, $V_y$ and $V_z$ are the first three rows of $V$, divided by $\rho$, respectively. Equation \eqref{eq:admset} states that the admissible set $\mathcal{S}_D$ represents of all the periodic relative orbits constrained within the hovering region given by Eq.\eqref{eq:constr1}. The reader can note that, as a subset of equilibrium set, $\adset$ is invariant by nature in absence of control. On figure \ref{fig:constrained_periodic_relative_orbits}, a few elements of $\adset{}$ are depicted.
\begin{figure}[h] 
	\begin{center}
		\includegraphics[width=12cm,height=12cm,keepaspectratio]{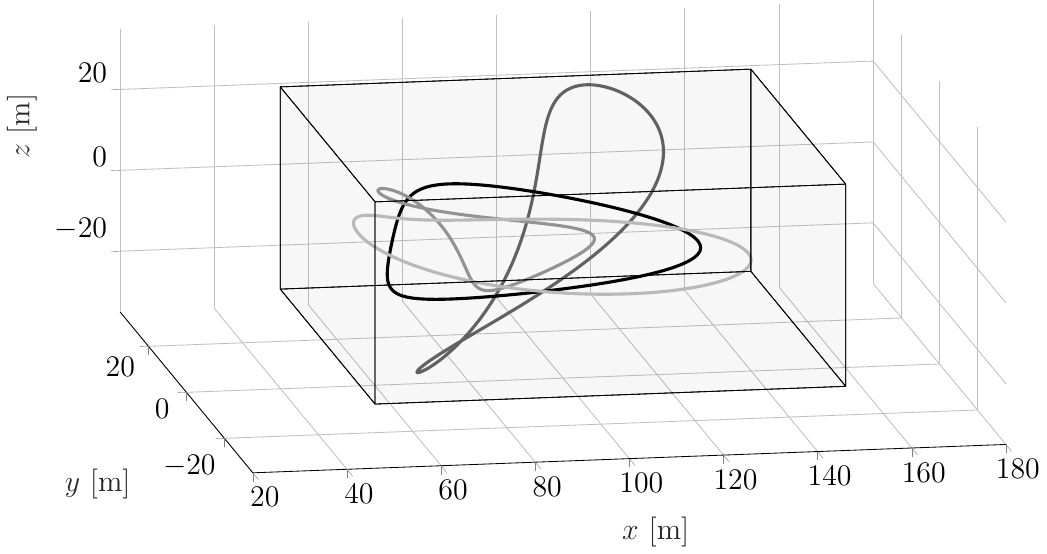}
	\end{center}
	\caption{Constrained periodic relative orbits.}	
	\label{fig:constrained_periodic_relative_orbits}
\end{figure}
The admissible set, $\adset$, is described by linear but time-varying conditions on the state $D$. In \cite{Arantes2017}, an implicitization method (see \cite{Hong1995}) is employed to obtain the envelope of each linear constraint composing the admissible set given by \eqref{eq:admset}. As a consequence, a semi-algebraic (free of the independent variable $\nu$) description of the admissible set is available as
\begin{equation}\label{eq:adset_envelope}
\adset=\{D \in \mathbb{R}^6 \mid d_0 = 0 \mid g_{w}(D) \leq 0,~\forall w \in \lbrace \xmin, \xmax, \ymin, \ymax, \zmin, \zmax\} \}.
\end{equation}
The admissible set $\adset$ has been demonstrated to be a closed convex set, see \cite{Arantes2019}. 
The functions $g_w(D)$ are multivariate polynomials in $d_1,~d_2,~d_3,~d_4$ and $d_5$
\begin{align}
g_{\xmin}(\mathbf{d}_{xz}, e, \xmin)& = \sum_{\beta\in\N_6^3} \underline\theta_\beta(e,\xmin)\mathbf{d}_{xz}^\beta\label{eq:xmin_envelope_interior},\\
g_{\xmax}(\mathbf{d}_{xz}, e, \xmax)& = \sum_{\beta\in\N_6^3} \overline\theta_\beta(e,\xmax)\mathbf{d}_{xz}^\beta,\label{eq:xmax_envelope_interior}\\
{g}_{\underline{y}}(d_4, d_5, e, \ymin)& = (d_4-e\underline{y})^2+d_5^2-\underline{y}^2, \label{eq:ymin_envelope}\\
{g}_{\overline{y}}(d_4, d_5, e, \ymax)& = (d_4-e\overline{y})^2+d_5^2-\overline{y}^2, \label{eq:ymax_envelope}\\
g_{\underline{z}}(d_1, d_2, \zmin)& = d_1^2+d_2^2-\underline{z}^2, \label{eq:zmin_envelope}\\                  
g_{\overline{z}}(d_1, d_2, \zmax)& = d_1^2+d_2^2-\overline{z}^2. \label{eq:zmax_envelope}                  
\end{align}
where $\mathbf d_{xz}=[d_1,~d_2,~d_3]^T$. The degree of the multivariate polynomials $g_{\xmin}(\cdot)$ and $g_{\xmax}(\cdot)$ is 6 for both cases. The set $\N_6^3$ is described as $\N_6^3:=\{\beta\in\N^3~:~\sum_{j=1}^3 \beta_j\leq 6 \}$; and its cardinality is $\binom{9}{6}$. Each term of these polynomials reads $\theta_{\beta_i} d_1^{\beta_{i,1}} d_2^{\beta_{i,2}} d_3^{\beta_{i,3}}$, $\beta_i\in\N_6^3$. For further details about the envelopes computation and admissible set boundaries characterization, please refer to \cite{Arantes2019}.
 
\section{Event-based predictive controller} \label{event_based_algorithm}

The proposed event-based predictive controller aims to compute and trigger a control impulse only when it is necessary in order to stabilize the admissible set $\adset$. Roughly speaking, the event-based philosophy consists in the following rules:\\
If the state belongs to the admissible set, no action is performed.\\
On the contrary, if the state does not belong to admissible set, a control decision is triggered based on the following conditions:
\begin{itemize}
	\item If the admissible set is reachable and certain reachability indicators (that are made clear in subsequent sections) fall below a threshold, the control acts.
	\item If the admissible set is currently unreachable but it is predicted to be reachable at any time during the next 2-$\pi$ time period, the controller will wait for that opportunity.
	\item If the previous conditions are not met, a back-up controller is used to ensure the stability. 
\end{itemize}

This section develops the different metrics to evaluate the admissible set reachability and proximity. Determining the admissible set reachability only makes sense when the spacecraft is outside of it, thus the trivial case of null control is not considered. Then, a proper set of trigger rules, based on the previous metrics, is developed. Finally, the single impulse optimal computation is presented and its numerical efficiency assessed.

Note that, as the in-plane and out-of-plane motions are decoupled, (see Eq.\eqref{eq:D_dyn} and Eq.\eqref{eq:B_D}), they are often treated separately in the rest of the paper.

\subsection{Reachability conditions}

The trigger rules design is based on the $\adset$ reachability conditions. To establish these conditions, the reachable set is described in the following sections. This reachable set is defined as the set of states that can be reached with a single control impulse, from a current state $D$, accounting for the thruster capabilities. Firstly, the instantaneous reachable set will be addressed. This set is the reachable set at the current time $\nu$. The instantaneous reachability condition is then derived by determining if there is an intersection between the instantaneous reachable set and the admissible set $\adset$. The 2-$\pi$ period reachable set is obtained by extending the instantaneous reachable set properties. Then, the 2-$\pi$ period reachability conditions are stated. Note that the latter set can be also seen as a region of attraction. 

\subsubsection{Instantaneous reachable set}
 This paragraph describes the instantaneous reachable set for both the out-of-plane and in-plane motions accounting for the thrusters saturation and minimum impulse bit:
\begin{equation}
\Delta^+_{\text{sat}}(D,\nu,\Delta V)=\Delta^+_{\text{sat},xz}(D,\nu,\Delta V)\times\Delta^+_{\text{sat},y}(D,\nu,\Delta V).
\end{equation}

\paragraph{Out-of-plane motion} 

The out-of-plane motion, represented by $D_y=[d_4,~d_5]^T$, is naturally periodic, thus an out-of-plane impulse $\Delta V_y=\lambda_y\in\mathbb{R}$ (the variable $\lambda_y$ is chosen to represent the out-of-plane impulse for notation consistency with the in-plane case) does not alter the orbit periodicity
\begin{equation}\label{eq:Dy+}
D^{+}_{y}(D_y, \nu, \lambda_y)=D_{y}+\lambda_yB_{D,y}(\nu),
\end{equation}   
where $B_{D,y}\in\mathbb{R}^{2}$ is given in Eq.\eqref{eq:def_BD_y}. The out-of-plane impulse allows the state $D_y$ to instantaneously change through the line $\Delta^+_{y}$ 
\begin{equation}\label{setDelta+y}
\Delta^+_{y}(D_y, \nu)=\{D^+_{y}(D_y,\nu,\lambda_y)\in\R^2 \text{ s.t. } \text{Eq.}\eqref{eq:Dy+}, \> \> \lambda_{y}\in\R  \}.
\end{equation}
However, the impulse amplitude is constrained due to dead-zone and saturations conditions. The set $\Lambda_{\text{sat},y}$ describes the out-of-plane thruster dead-zone and saturation conditions such that 
\begin{equation}\label{eq:Isat_y}
\Lambda_{\text{sat},y}=[-\overline{\Delta V},~-\underline{\Delta V}] \cup [\underline{\Delta V},~\overline{\Delta V}].
\end{equation}\label{Lambda_saty}
Note that $\Lambda_{\text{sat},y}$ does not depend on $\nu$. By using $\Lambda_{\text{sat},y}$, one can define the out-of-plane instantaneous reachable set
\begin{equation}\label{eq:Deltaysat}
\Delta^+_{\text{sat},y}(D_y,\nu)=\{D^+_{y}(D_y,\nu,\lambda_y)\in\R^2~~\text{s.t.}~~ \lambda_{y}\in \Lambda_{\text{sat},y}\},
\end{equation}
which is composed by two segments of the line $\Delta^+_{y}$.

\paragraph{In-plane motion}

The in-plane motion is described by the state subset $D_{xz}=[d_0,~d_1,~d_2,~ d_3]^T$. As stated previously in the literature \cite{Brentari2018,Arantes2019}, periodicity (also studied under Earth's oblateness effects \cite{Roscoe2011}) is a desirable property to hover around a specified region. Consequently, the in-plane control aims to steer the system to a constrained periodic orbit with a single in-plane impulse. After an in-plane impulse $\Delta V_{xz}=[\Delta V_x,~\Delta V_z]^T$, the in-plane state, $D^{+}_{xz}$ is given by
\begin{equation}\label{eq:Dxz+_first}
D^{+}_{xz}(D_{xz},\nu,\Delta V_{xz})=D_{xz}+B_{D,xz}(\nu)\Delta V_{xz},
\end{equation}
where $B_{D,xz}\in\mathbb{R}^{4\times2}$ is given by Eq.\eqref{eq:def_BD_xz}. To obtain a periodic orbit, the state $d_0$ is systematically steered to zero after each in-plane control impulse:
\begin{equation}\label{eq:d0+=0}
d^{+}_{0}(d_0,\nu,\Delta V_{xz}) = d_0+B_{d_0,xz}(\nu)\Delta V_{xz}=0,
\end{equation}
with $B_{d_i,xz}\in\mathbb{R}^{2}$, being the first row of $B_{D,xz}$. An impulse satisfying Eq.\eqref{eq:d0+=0} can be written in a general form as
\begin{equation}\label{eq:deltaV_xz}
\Delta V_{xz}(d_0, \nu,\lambda_{xz})=\lambda_{xz}B^\bot_{d_0,xz}(\nu)+\Delta V^{0}_{xz}(d_0,\nu),
\end{equation}
where $\lambda_{xz}\in\R$ is the in-plane control variable, $B^\bot_{d_0,xz}\in\mathbb{R}^2$ describes the kernel space of $B_{d_0,xz}$ (it is chosen as $||B^\bot_{d_0,xz}||_2=1$) and $\Delta V^{0}_{xz}\in\mathbb{R}^2$ is any particular solution of Eq.\eqref{eq:d0+=0}. Note that $\Delta V_{xz}^0$ always exists since the first entry of $B_D$ is $\rho/(e^2-1)k^2<0 ~\text{for}~ 0\leq e<1~\text{and}~\nu\in[0,~2\pi]$ . Using this periodicity-pursuing strategy, Eq.\eqref{eq:Dxz+_first} can be expanded as
\begin{equation}\label{eq:Dxz+}
D^{+}_{xz}(D_{xz},\nu,\lambda_{xz})=D_{xz}+B_{D,xz}(\nu)\left[\lambda_{xz}B^\bot_{d_0,xz}(\nu)+\Delta V^0_{xz}(d_0,\nu)\right].
\end{equation}
Then, considering an unconstrained control, the in-plane instantaneous reachable set can be described as a one dimensional set
\begin{equation}\label{setDelta+xz}
\Delta^+_{xz}(D_{xz},\nu)=\{D^+_{xz}(D_{xz},\nu,\lambda_{xz})\in\R^4~~\text{s.t.}~~\text{Eq.}\eqref{eq:Dxz+},~~ \lambda_{xz}\in\R \}.
\end{equation}
Again, dead-zone and saturation constraints are formally described by the set $\Lambda_{\text{sat},xz}$ such that
\begin{equation}\label{eq:Isat_xz}
\begin{split}
\Lambda_{\text{sat},xz}(d_0,\nu)=\{\lambda_{xz} \in \R \quad \text{s.t.} \quad \underline{\Delta V} \leq \lVert \lambda_{xz}B_{d_{0,xz}}^\bot(\nu) + \Delta V^0_{xz}(d_0,\nu) \rVert_2 \leq \overline{\Delta V}\}=\\
[\overline{\lambda}_{xz,1}(d_0,\nu),~\underline{\lambda}_{xz,1}(d_0,\nu)] \cup [\underline{\lambda}_{xz,2}(d_0,\nu),~\overline{\lambda}_{xz,2}(d_0,\nu)],
\end{split}
\end{equation}
being $\underline{\lambda}_{xz,1}$, $\underline{\lambda}_{xz,2}$, $\overline{\lambda}_{xz,1}$ and $\overline{\lambda}_{xz,2}$
\begin{align}
\underline{\lambda}_{xz,1},~\underline{\lambda}_{xz,2}=-(B_{d_{0,xz}}^\bot)^T\Delta V^0_{xz}\pm\sqrt{\left((B_{d_{0,xz}}^\bot)^T\Delta V^0_{xz}\right)^2-\lVert\Delta V^0_{xz}\rVert^2_2+\underline{\Delta V}^2},\\
\overline{\lambda}_{xz,1},~\overline{\lambda}_{xz,2}=-(B_{d_{0,xz}}^\bot)^T\Delta V^0_{xz}\pm\sqrt{\left((B_{d_{0,xz}}^\bot)^T\Delta V^0_{xz}\right)^2-\lVert\Delta V^0_{xz}\rVert^2_2+\overline{\Delta V}^2}.
\end{align}
Note that the dependencies with $d_0$ and $\nu$ have been omitted for the sake of clarity. Therefore, the instantaneous in-plane reachable set, accounting for dead-zone and saturation, is given by
\begin{equation}\label{eq:Deltaxzsat}
\begin{aligned}
\Delta^+_{\text{sat},xz}(D_{xz},\nu)=&\left\lbrace D^+_{xz}(D_{xz},\nu,\lambda_{xz})\in\R^4~~\text{s.t.}~~D^+_{xz}=D_{xz}+B_{D,xz}(\nu)\left[\lambda_{xz}B^\bot_{d_0,xz}(\nu)+\Delta V^0_{xz}(d_0,\nu)\right]\right.,\\
&\lambda_{xz}\in \Lambda_{\text{sat},xz}(d_0,\nu)\Big\}.
\end{aligned}
\end{equation}

\subsubsection{Admissible set reachability conditions}

For the sake of clarity, the time dependence $\nu$ and state dependence $D$ are omitted in this section.

A necessary and sufficient condition for the admissible set $\adset{}$ to be reachable from the current state is that the following sets $\Delta^S_{\text{sat},xz}$ and $\Delta^S_{\text{sat},y}$ are non-empty:
\begin{align}
\Delta^S_{\text{sat},xz}&=\Delta^+_{\text{sat},xz}\cap \adsetxz,\\
\Delta^S_{\text{sat},y}&=\Delta^+_{\text{sat},y} \cap \adsety,
\end{align}
where $\adsetxz$ and $\adsety$ denote the $\adset$ projections in the $D_{xz}$ and $D_y$ spaces respectively. The sets $\Delta^S_{\text{sat},xz}$ and $\Delta^S_{\text{sat},y}$ are defined as the intersection of the sets $\Delta^+_{\text{sat},xz}$ and $\Delta^+_{\text{sat},y}$ (which are parameterized by a single parameter, $\lambda_{xz}$ or $\lambda_y$) with $\adset{}$, the semi-algebraic set from Eq.\eqref{eq:adset_envelope}, respectively. Consequently, if non-empty, the sets $\Delta^S_{\text{sat},xz}$ and $\Delta^S_{\text{sat},y}$ are line segments parameterized by $\lambda_{xz}$ and $\lambda_{y}$. To obtain a tractable characterization of the previously defined sets, the following sets are stated
\begin{eqnarray}
\Delta^S_{xz}&=& \Delta^+_{xz}\cap \adsetxz,\\
\Delta^S_{y}&=& \Delta^+_{y}\cap \adsety.
\end{eqnarray}
As $\Delta^+_{xz}$ and $\Delta^+_y$ are lines in their respective spaces (see Eq.\eqref{setDelta+xz} and Eq.\eqref{setDelta+y}) and the admissible set $\adset$ is described through its envelope (cf. Eq.\eqref{eq:adset_envelope}), the sets $\Delta^S_{xz}$ and $\Delta^S_{y}$ are given by
\begin{eqnarray}
\Delta^S_{xz}&=&\{D^+_{xz}(\lambda_{xz})\in \Delta^+_{xz}~ |~ g_{\xmin}(D^+(\lambda_{xz}))\leq 0,~g_{\xmax}(D^+(\lambda_{xz}))\leq 0,~g_{\zmin}(D^+(\lambda_{xz}))\leq 0,~g_{\zmax}(D^+(\lambda_{xz}))\leq 0\},\\
\Delta^S_{y}&=&\{D^+_{y}(\lambda_{y})\in \Delta^+_{y}~ |~ g_{\ymin}(D^+(\lambda_{y}))\leq 0,~ g_{\ymax}(D^+(\lambda_{y}))\leq 0 \}.
\end{eqnarray}
The property of $D^+$ belonging to $\adset$ boils down to formal conditions on the control variables $\lambda_{xz}$ and $\lambda_y$ such that $\lambda_{xz}\in \Lambda^S_{xz}=[\underline{l_{xz}}, \overline{l_{xz}}]$ and $\lambda_{y}\in \Lambda^S_y=[\underline{l_{y}}, \overline{l_{y}}]$.

\begin{rem}\label{rem_connectedness}
If non-empty, the connectedness of $\Lambda^S_{xz}$ and $\Lambda^S_y$ and consequently of the sets $\Delta^S_{xz}$ and $\Delta^S_y$ is ensured by the convexity of $\adset$.
\end{rem}

The intervals $\Lambda^S_{xz}$ and $\Lambda^S_y$ are computed such that 
\begin{align}
\forall \lambda_{xz} \in [\underline{l_{xz}}, \overline{l_{xz}}]:&\{ g_{\overline{x}}(\lambda_{xz})\le 0,~ g_{\underline{x}}(\lambda_{xz})\le 0,~ g_{\overline{z}}(\lambda_{xz})\le 0,~ g_{\underline{z}}(\lambda_{xz})\le 0\}, \label{eq:interval_lambda_xz},\\
\forall \lambda_{y} \in [\underline{l_{y}}, \overline{l_{y}}]:&\{g_{\overline{y}}(\lambda_y)\le 0,~ g_{\underline{y}}(\lambda_y)\le 0\}. \label{eq:interval_lambda_y}
\end{align}
The univariate polynomials $g_w(\cdot)$ are obtained by introducing Eq.\eqref{eq:Dxz+} and Eq.\eqref{eq:Dy+} into the polynomial expressions $g_{\overline{x}}(d_1, d_2, d_3)$, $g_{\underline{x}}(d_1, d_2, d_3)$, $g_{\overline{z}}(d_1, d_2)$, $g_{\underline{z}}(d_1, d_2)$, $g_{\overline{y}}(d_4, d_5)$ and $g_{\underline{y}}(d_4, d_5)$ respectively. Therefore, the intervals bounds $\underline{l_{xz}}$, $\overline{l_{xz}}$, $\underline{l_{y}}$ and $\overline{l_{y}}$ can be computed as roots of the univariate polynomials arising in Eq.\eqref{eq:interval_lambda_xz}-\eqref{eq:interval_lambda_y} respectively. 
The out-of-plane ($g_{\ymin}$, $g_{\ymax}$) and radial ($g_{\zmin}$, $g_{\zmax}$) constraints  are quadratic polynomials in $\lambda_y$ and $\lambda_{xz}$ respectively, whereas the in-track ($g_{\xmin}$, $g_{\xmax}$) constraints are sextic polynomials in $\lambda_{xz}$. 
Following the remark \ref{rem_connectedness}, the existence of two real roots is ensured if the sets $\Delta^S_{xz}$ and $\Delta^S_{y}$ are non-empty. On the contrary, the absence of real roots reveals that  $\Delta^S_{xz}$ and $\Delta^S_{y}$ are empty sets. Note that polynomial roots computations are efficiently executed with most of numerical scientific libraries.

Taking into account the dead-zone and saturation conditions of Eq.\eqref{eq:Isat_xz} and Eq.\eqref{eq:Isat_y}, $\adset$ is instantaneously reachable if and only if the sets $\Delta^S_{\text{sat},xz}$ and $\Delta^S_{\text{sat},y}$ are non-empty
\begin{align}
\Delta^S_{\text{sat},xz}&=\{D_{xz}^+(\lambda_{xz})| \lambda_{xz} \in \Lambda^S_{\text{sat},xz} \},\\
\Delta^S_{\text{sat},y}&=\{D_{y}^+(\lambda_{y}) | \lambda_{y} \in  \Lambda^S_{\text{sat},y} \},
\end{align}
where $\Lambda^S_{\text{sat},xz}= \Lambda^S_{xz}\cap \Lambda_{\text{sat},xz}$ and  $\Lambda^S_{\text{sat},y}= \Lambda^S_{y}\cap \Lambda_{\text{sat},y}$. Since $\Lambda_{\text{sat},xz}$ and $\Lambda_{\text{sat},y}$ are not connected sets, their intersections with $\Lambda^S_{xz}$ and $\Lambda^S_y$ do not yield connected sets either
\begin{align}
\Lambda^S_{\text{sat},xz}=&[l_{xz1},~l_{xz2}]\cup[l_{xz3},~l_{xz4}], \label{eq:lambda_sat_xz}\\
\Lambda^S_{\text{sat},y}=&[l_{y1},~l_{y2}]\cup[l_{y3},~l_{y4}]. \label{eq:lambda_sat_y}
\end{align}
To assess the $\adset$ proximity, it is useful to define the following variables measuring the total length of the intervals composing $\Lambda^S_{\text{sat},xz}$ and $\Lambda^S_{\text{sat},y}$ 
\begin{align}
L_{xz}=\mathrm{len}(\Lambda^S_{\text{sat},xz}),& \quad L_y=\mathrm{len}(\Lambda^S_{\text{sat},y}).
\end{align}
These indicators suggest that the admissible set is reachable if and only if both of them differ from 0
\begin{equation}
\Delta^S_{\text{sat}}=\Delta^S_{\text{sat},xz}\times\Delta^S_{\text{sat},y}\neq \emptyset \Leftrightarrow L_{xz}\neq 0 \land L_y \neq 0. 
\end{equation}
However, the lengths $L_{xz}$ and $L_y$ are not well-posed indicators to effectively assess $\adset$ proximity by detecting $\adset$ reachability opportunities in a continuous manner. This is due to the fact that $\adset$, though a convex set, is defined as the interior region resulting from the intersection of several semi-algebraic sets described by multivariate polynomials \eqref{eq:xmin_envelope_interior}-\eqref{eq:zmax_envelope}. 
As a consequence, $\adset$ may have edges and vertexes where the disjoint lines $\Delta^S_{\text{sat},xz}$ and $\Delta^S_{\text{sat},y}$ (intersections of the instantaneous reachable set with $\adset$) could instantaneously vanish, thus $L_{xz}$ and $L_y$ are not guaranteed to be continuous functions with time. To overcome this issue, alternate instantaneous reachability metrics (employed in paragraph \ref{single_impulse_control_law} as instantaneous reachability indicators) are defined as
\begin{equation}
\begin{aligned}
G_{xz}&=\begin{dcases}
\max\{g^*_{\xmin},g^*_{\xmax},g^*_{\zmin},g^*_{\zmax}\},~~&\text{if}~~L_{xz}>0,\\
0,~~&\text{if}~~L_{xz}=0,\end{dcases} &G_{y}&=\begin{dcases}\max\{g^*_{\ymin},g^*_{\ymax}\},~~&\text{if}~~L_{y}>0,\\
0,~~&\text{if}~~L_{y}=0,\end{dcases}\\
G_{\nu,xz}&=\frac{dG_{xz}}{d\nu},& G_{\nu,y}&=\frac{dG_{y}}{d\nu},
\end{aligned}\label{eq:epsilon_indicator}
\end{equation}
where 
\begin{equation}
\begin{aligned}
&g^*_{\xmin}=\min_{\lambda_{xz}}g_{\xmin}(\lambda_{xz})~~&\text{s.t.}&~~\lambda_{xz}\in\Lambda^S_{\text{sat},xz},&\quad&g^*_{\xmax}=\min_{\lambda_{xz}}g_{\xmax}(\lambda_{xz})~~&\text{s.t.}&~~\lambda_{xz}\in\Lambda^S_{\text{sat},xz},\\
&g^*_{\ymin}=\min_{\lambda_{y}}g_{\ymin}(\lambda_{y})~~&\text{s.t.}&~~\lambda_{y}\in\Lambda^S_{\text{sat},y},&\quad&g^*_{\ymax}=\min_{\lambda_{y}}g_{\ymax}(\lambda_{y})~~&\text{s.t.}&~~\lambda_{y}\in\Lambda^S_{\text{sat},y},\\
&g^*_{\zmin}=\min_{\lambda_{xz}}g_{\zmin}(\lambda_{xz})~~&\text{s.t.}&~~\lambda_{xz}\in\Lambda^S_{\text{sat},xz},&\quad&g^*_{\zmax}=\min_{\lambda_{xz}}g_{\zmax}(\lambda_{xz})~~&\text{s.t.}&~~\lambda_{xz}\in\Lambda^S_{\text{sat},xz},\\
\end{aligned}
\end{equation}
that is $g^*_{\xmin}$, $g^*_{\xmax}$, $g^*_{\ymin}$, $g^*_{\ymax}$, $g^*_{\zmin}$ and $g^*_{\zmax}$ are the $\adset$ univariate polynomials, in $\lambda_{xz}$ and $\lambda_y$, minimums for $\lambda_{xz}\in\Lambda^S_{\text{sat},xz}$ and $\lambda_{y}\in\Lambda^S_{\text{sat},y}$. These minimums are computed through a zeros search (roots computation) of the $\adset$ polynomials derivative with respect to $\lambda_{xz}$ or $\lambda_y$ respectively. The variables $G_{xz}$ and $G_y$ are guarantee to be continuous signals over time. Note that, due to the $\max$ operator in Eq.\eqref{eq:epsilon_indicator}, $G_{\nu,xz}$ and $G_{\nu,y}$ derivatives shall not be continuous in general.      

Figure \ref{fig:sketch_event_strategy} illustrates the reachability conditions (and associated metrics) for the in-plane motion at three different instants ($\nu_1$, $\nu_2$, $\nu_3$). Considering that the state is close enough to the admissible set, $D$ can be assumed near the equilibrium e.g. $|d_0|\approx 0$ (quasi-equilibrium assumption). Therefore, in this illustration, the state $D$ is assumed to be constant over time. It can be observed that $L_{xz}(D,\nu_1) > L_{xz}(D,\nu_2) > L_{xz}(D,\nu_3)=0$ (with $G_{xz}(D,\nu_1)<G_{xz}(D,\nu_2)<G_{xz}(D,\nu_3)=0$). This suggests that the single impulse must be applied before the length $L_{xz}$ vanishes (or equivalently $G_{xz}$ becomes null) and $\adset$ becomes unreachable with a single impulse. However, at $\nu=\nu_3$ where $\adset$ is instantaneously unreachable, reachability opportunities are predicted to come along the next orbit period as it was assumed $D$ is constant. Under this quasi-equilibrium assumption, both signals $L_{xz}$ and $G_{xz}$ are $2\pi$ periodic: $L_{xz}(\nu) \approx L_{xz}(\nu+2\pi)$ and $G_{xz}(\nu) \approx G_{xz}(\nu+2\pi)$. This is due to the 2-$\pi$ periodicity of the $D$ state control matrix $B_D(\nu) = B_D(\nu+2\pi)$, see Eq.\eqref{eq:B_D}, and the quasi-equilibrium assumption. 
Consequently, new control opportunities are expected at the instants $2\pi+\nu_1$ and $2\pi+\nu_2$ again. This statement also applies to the out-of-plane motion which is naturally periodic.    

\begin{figure}[h] 
	\begin{center}
		\includegraphics[width=12cm,height=12cm,keepaspectratio]{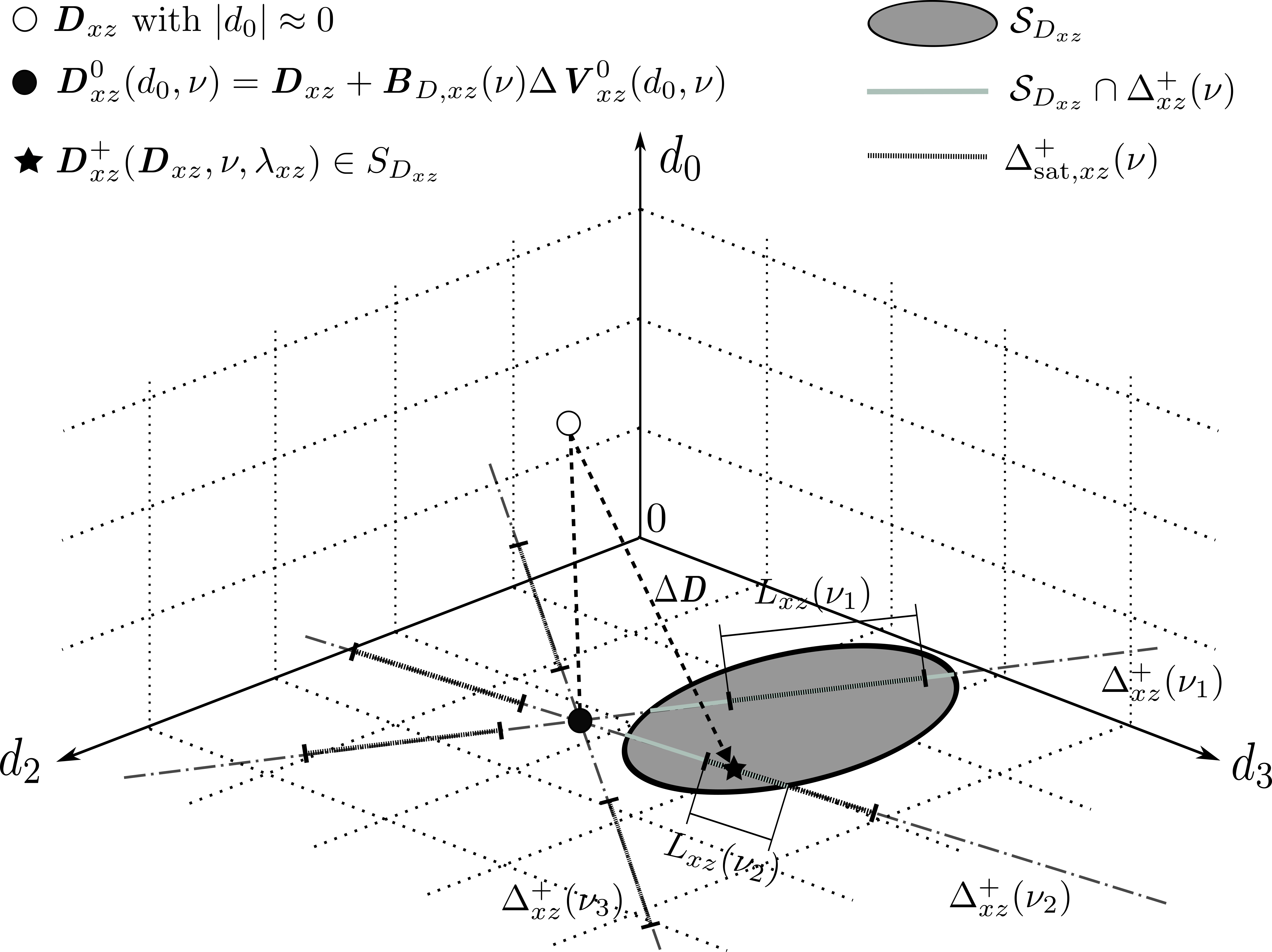}
	\end{center}
	\caption{Sketch illustrating the admissible set instantaneous reachability for $\nu_1\leq\nu_2\leq\nu_3$.}	
	\label{fig:sketch_event_strategy}
\end{figure}

\subsection{Region of attraction}\label{reachability_one_period}

In the previous section, instantaneous reachability conditions to $\adset$ and related metrics have been established for a given time $\nu$. Extending the previous conclusions, this section establishes the admissible set region of attraction $\mathcal{D}$. The region of attraction is defined as the set of states $D$ from which $\adset$ is instantaneously reachable for a short period of time taking place along the next 2-$\pi$-period:
\begin{equation}\label{eq:D_set}
\mathcal{D}=\mathcal{D}_{xz}\times\mathcal{D}_y,
\end{equation}
where
\begin{align}\label{eq:Dxz_set}
\mathcal{D}_{xz}:=&\left\{D\in\mathbb{R}^4: \>D\not\in\adsetxz, \> \int^{\nu+2\pi}_{\nu} L_{xz}(D_{xz}(\tau),\tau)d\tau \neq 0\right\},\\
\mathcal{D}_{y}:=&\left\{D\in\mathbb{R}^2: \>D\not\in\adsety, \> \int^{\nu+2\pi}_{\nu} L_{y}(D_y,\tau)d\tau \neq 0 \right\}.\label{eq:Dy_set}
\end{align}
Note that the in-plane state $D_{xz}$ varies if $d_0\neq0$, see Eq.\eqref{eq:D_trans_mat}. Such an attractive set $\mathcal{D}$ is of particular interest for the set of triggering rules to be defined in the sequel. As a matter of fact, if the current state belongs to $\mathcal{D}$, the impulsive control can be postponed for another reachability opportunity, over the next 2-$\pi$ period, even if $\adset$ is not instantaneously reachable at the moment. 

The definition of Eq.\eqref{eq:D_set} also provides the condition for the state $D$ to belong to the region of attraction $\mathcal D$:
\begin{equation}
	\int^{\nu+2\pi}_{\nu} L_{xz}(D_{xz}(\tau),\tau)d\tau \neq 0~~ \wedge~~ \int^{\nu+2\pi}_{\nu} L_{y}(D_y,\tau)d\tau \neq 0~~	\Longleftrightarrow ~~D\in \mathcal D.
\end{equation}
For the sake of computational efficiency, the previous integrals of Eq.\eqref{eq:Dxz_set}-\eqref{eq:Dy_set} are approximated through a discretization scheme:
\begin{equation}
	\begin{array}{cl}
	\displaystyle\frac{1}{2\pi}\int_\nu^{\nu+2\pi} L_{xz}(D_{xz}(\tau),\tau) d\tau &\displaystyle \approx \sum^{n_L}_{p=1} L_{xz}(D_{xz}(\nu_p),\nu_p),\\
	\displaystyle\frac{1}{2\pi}\int_\nu^{\nu+2\pi} L_y(D_y,\tau) d\tau &\displaystyle \approx \sum^{n_L}_{p=1} L_y(D_y,\nu_p),\\
	\end{array}
	 \> \> \nu_p=\nu+\tfrac{2\pi p}{n_L},~~n_L\in\N.\label{eq:reachability_one_period}
\end{equation}
Using the previous approximation, a sufficient condition for the state $D$ to belong to the region of attraction $\mathcal D$ can be stated:
\begin{equation}\label{SufCond_Din}
	 \nu_p=\nu+\tfrac{2\pi p}{n_L},~~n_L\in\N,\quad
	 \begin{cases}
	 	\displaystyle\sum^{n_L}_{p=1} L_{xz}(D_{xz}(\nu_p),\nu_p)\neq 0,\\
		\displaystyle\sum^{n_L}_{p=1} L_y(D_y,\nu_p)\neq 0,\\
	 \end{cases}
	\Longrightarrow D\in\mathcal D.
\end{equation}
The condition given by Eq.\eqref{SufCond_Din}, is the one employed in the up-coming event-based control algorithm.

\subsection{Single-impulse control computation}\label{single_impulse_control_law}

Subsequently, the single-impulse programs to be solved are presented. By using the previously defined sets $\Lambda^S_{\text{sat},xz}$ and $\Lambda^S_{\text{sat},y}$, in the control variables spaces, it can be shown that the single impulse control programs can be easily computed with only a few objective function evaluations.

\paragraph{Out-of-plane control}

To steer the state $D_y$ to the admissible set $\adsety$ at time $\nu$, the following program is considered
\begin{equation}\label{Psat_out}
\begin{array}
{c}
\displaystyle \min_{\lambda_{y}} \lVert \Delta V_y(\lambda_y) \rVert_1,\\
\text{s.t.}~
\begin{cases}
D^+_{y}(D_y,\nu,\lambda_y)\in \adsety,\\
\Delta V_y = \lambda_y,\\
\lambda_{y} \in \Lambda_{\text{sat},y}.\\
\end{cases}
\end{array}
\end{equation}
The constraints of \eqref{Psat_out} are equivalent to $\lambda_{y} \in \Lambda^S_{\text{sat},y}$, see Eq.\eqref{eq:lambda_sat_y}, hence
\begin{equation}\label{Psat_out_lambday}
\displaystyle \min_{\lambda_{y}} | \lambda_y |,~~\text{s.t.}~~\lambda_y \in \Lambda^S_{\text{sat},y},
\end{equation}
where the optimal candidates are the intervals composing $\Lambda^S_{\text{sat},y}$, given by Eq.\eqref{eq:lambda_sat_y}, extrema
\begin{equation}
\Lambda^*_{y}=\{\lambda_{y1},~\lambda_{y2},~\lambda_{y3},~\lambda_{y4}\}.
\end{equation}
The optimal solution can be easily found by evaluating four times the objective function and choosing the value that yields the minimum
\begin{equation}\label{Psat_out_reduced}
\lambda^*_y=\argmin_{\lambda_y \in \Lambda^*_y}(|\lambda_y|).
\end{equation}

\paragraph{In-plane control}

To steer the in-plane state $D_{xz}$ to the admissible set $\adsetxz$ at time $\nu$, the following program is considered
\begin{equation}\label{Psat_in}
\begin{array}
{c}
\displaystyle \min_{\lambda_{xz}} \lVert \Delta V_{xz}(\lambda_{xz}) \rVert_1,\\
\text{s.t.}~
\begin{cases}
D^+_{xz}(D_{xz},\nu,\lambda_{xz})\in \adsetxz,\\
\Delta V_{xz}(D_{xz},\nu)=\lambda_{xz}B_{d_{0},xz}^\bot(D_{xz},\nu)+\Delta V^0_{xz}(D_{xz}),\\
\lambda_{xz} \in \Lambda_{\text{sat},xz}(D_{xz},\nu).\\
\end{cases}
\end{array}
\end{equation}
As in the out-of-plane case, the constraints of \eqref{Psat_in} are equivalent to $\lambda_{xz} \in \Lambda^S_{\text{sat},xz}$, see Eq.\eqref{eq:lambda_sat_xz}. Expanding the objective function yields
\begin{equation}\label{Psat_in_lambdaxz}
\displaystyle \min_{\lambda_{xz}} (|\lambda_{xz}B_{d_{0,x}}^\bot+\Delta V^0_{x}|+|\lambda_{xz}B_{d_{0,z}}^\bot+\Delta V^0_{z}|),~~\text{s.t.}~~\lambda_{xz} \in \Lambda^S_{\text{sat},xz},
\end{equation}
where the optimal candidates are the intervals composing $\Lambda^S_{\text{sat},xz}$ extrema and the points where the objective function slope changes due to the presence of the absolute value function
\begin{equation}
\Lambda^*_{xz}=\{\lambda_{xz1},~\lambda_{xz2},~\lambda_{xz3},~\lambda_{xz4},~-\Delta V^0_{x}/B_{d_{0,x}}^\bot,~-\Delta V^0_{z}/B_{d_{0,z}}^\bot\},
\end{equation}
thus the optimal solution is given by
\begin{equation}\label{Psat_in_reduced}
\lambda^*_{xz}=\argmin_{\lambda_{xz}\in \Lambda^S_{\text{sat},xz}\cap\Lambda_{xz}^*}(|\lambda_{xz}B_{d_{0,x}}^\bot+\Delta V^0_{x}|+|\lambda_{xz}B_{d_{0,z}}^\bot+\Delta V^0_{z}|).
\end{equation}
Note that only six objective function evaluations are required to find the optimal solution.

\paragraph{Coupled motion control}  In previous cases, the control is assumed to be purely applied either as in-plane or out-of-plane. In the case where in-plane and out-of-plane motions require to be controlled at the same time, the following approach is proposed. Its aim is to account for the coupling of the in-plane and out-of-plane motions through the control constraints in Eq.\eqref{sat_dz_conds} while preserving the low complexity level of problems \eqref{Psat_out_lambday} and \eqref{Psat_in_lambdaxz}.\\
First recall that each control is systematically computed to steer the satellite on a periodic orbit, by satisfying Eq.\eqref{eq:d0+=0}, thus the in-plane control structure is given by Eq.\eqref{eq:deltaV_xz}. Then, the impulsive control $\Delta V\in \R^3$ is given by
\[\Delta V(\lambda_{xz},\lambda_y)=[B^\bot_{d_0,x}(D_{xz},\nu)\lambda_{xz}+\Delta V^0_x(D_{xz},\nu),~\lambda_y,~B^\bot_{d_0,z}(D_{xz},\nu)\lambda_{xz}+\Delta V^0_z(D_{xz},\nu)]^T.\]
Consequently, when the relative motion is coupled, the following program is considered to compute the control
\begin{equation}
\begin{aligned}
& \underset{\lambda_{xz},\lambda_y}{\text{min}}
& & \lVert\Delta V(\lambda_{xz},\lambda_y)\rVert_1, \\
& \text{s.t.} & & \Delta V=[B^\bot_{d_0,x}\lambda_{xz}+\Delta V^0_x,~\lambda_y,~B^\bot_{d_0,z}\lambda_{xz}+\Delta V^0_z]^T,\\
&&& \underline{\Delta V}\leq \rVert\Delta V\lVert_2 \leq \overline{\Delta V},\\
&&& \lambda_{xz},\lambda_y\in \Lambda^S_{\mathrm{sat}},\\
\end{aligned}
\end{equation}
where dependencies with time, $\nu$, and the in-plane state $D_{xz}$ have been omitted for the sake of clarity. Then, as the reachability conditions are met both for the in-plane and out-of-plane motion, the search set $\Lambda^S_{\mathrm{sat}}\in\R^2$ is non empty and is described by
\[\Lambda^S_{\mathrm{sat}} = \Lambda^S_{\mathrm{sat},xz} \times \Lambda^S_{\mathrm{sat},y},\]
which is a non connected set composed of four convex components.
These four components are boxes of $\R^2$ whose vertices are given the sixteen couples $(\lambda_{xz}, \lambda_y)$:
\[\left\{(\lambda_{xz}, \lambda_y)_{i=1,\dots,16}\right\}=\{(\lambda_{xz}, \lambda_y) \text{ s.t. } \lambda_{xz}, \lambda_y \in \{\lambda_{xz1},~\lambda_{xz2},~\lambda_{xz3},~\lambda_{xz4}\} \times \{\lambda_{y1},~\lambda_{y2},~\lambda_{y3},~\lambda_{y4}\}\}.\]
Then, the approach consists in assessing the control constraints of Eq.\eqref{sat_dz_conds} at $\left\{(\lambda_{xz}, \lambda_y)_{i=1,\dots,16}\right\}$, which are the vertices of the four $\Lambda^S_{\mathrm{sat}}$ components. 
Thanks to the convexity property of the four components of $\Lambda^S_{\mathrm{sat}}$, if any of these vertices meet Eq.\eqref{sat_dz_conds} conditions, the optimization problem has a non empty solution space. The optimal one is chosen among the admissible vertices. On the contrary, if none of these vertices is admissible, a policy prioritizing the in-plane motion control over the out-of-plane one is applied.

\subsection{Event-based control algorithm}\label{event_based_control_algorithm}

The event-based control algorithm is composed of the single impulse control law presented in Section \ref{single_impulse_control_law} associated to the trigger rules. These trigger rules are designed to achieve a threefold objective. Firstly, they have to ensure the in-plane \eqref{Psat_in_reduced} and out-of-plane \eqref{Psat_out_reduced} control programs feasibility when executed. Secondly, unnecessary commands must be avoided. Finally, Zeno phenomena should be precluded.

The proposed trigger rules are summarized by Algorithm \ref{alg1}. The computation and execution of a control impulse takes place when certain conditions are met. Each time the spacecraft leaves the admissible set $\adset{}$, based on Eq.\eqref{eq:adset_envelope}, the instantaneous reachability indicators $(G_{xz},~G_y,~G_{\nu,xz},~G_{\nu,y})$ are checked. If $G_{xz}$ or $G_y$ are equal or above predefined thresholds, $\delta_{xz}$ or $\delta_y$ respectively, and growing ($G_{\nu,xz}>0$ or $G_{\nu,y}>0$), a single impulse is commanded through programs \eqref{Psat_in_reduced} or \eqref{Psat_out_reduced}. Note that the thresholds should be negative $\delta_{xz}\leq0$ and $\delta_y\leq0$. Otherwise, if $G_{xz}=0$ or $G_y=0$, the reachability over one target orbit period is assessed e.g. the belonging of state $D$ to $\mathcal{D}$ is evaluated. If it is the case that $D\notin\mathcal{D}$ ($\adset$ is unreachable over the next 2-$\pi$ period) the controller of \cite{Arantes2019}, which globally stabilizes $\adset$ with three periodical impulses, is commanded. 

The choice of the thresholds $\delta_{xz}$ and $\delta_y$ will impact the controller behavior. If they are very negative, the single-impulse law is triggered more frequently. Otherwise, if the thresholds are close to zero, the single-impulse law will be triggered at a slow pace which could potentially increase the controller efficiency. However, there exists an upper bound for the triggering thresholds in order to ensure the single-impulse law is triggered. This upper bound could be computed by finding the highest possible values of $G_{xz}$ and $G_y$ along a target orbit period over the whole domain $\mathcal{D}$.  Choosing the thresholds below these upper bounds guarantees that the single impulse control will always be commanded for $D\in\mathcal{D}$ at some time between $\nu$ and $\nu+2\pi$
\begin{equation}
\begin{aligned}
\delta_{xz} &< \min\left\{
\sup_{D \in \mathcal {D}} \left( \max_{\nu\in[0,2\pi]} G_{xz}(D_{xz},\nu)\right)
\right\},\\
\delta_{y} &< \min\left\{
\sup_{D \in \mathcal {D}} \left( \max_{\nu\in[0,2\pi]} G_{y}(D_{y},\nu)\right)
\right\}.\label{eq:def-delta}
\end{aligned}
\end{equation}
\begin{algorithm}
	\caption{(Trigger rules)}
	\textbf{Input:} $D$, $\nu$\\
	\textbf{Output:} control decision
	\begin{algorithmic}[1]
	\If {$D\in \adset$}
	\State Wait.
	\ElsIf {$D\notin \adset$ and $D\in\mathcal{D}$}
		\If {$G_{xz}(D,\nu)\geq \delta_{xz}$ and $G_{\nu,xz}(D,\nu)>0$ and $G_{y}(D,\nu)\geq \delta_y$ and $G_{\nu,y}(D,\nu)>0$}
		\State Solve \eqref{Psat_in_reduced} and \eqref{Psat_out_reduced}, apply $\Delta V_{xz}$ and $\Delta V_{y}$.
		\ElsIf {$G_{xz}(D,\nu)\geq \delta_{xz}$ and $G_{\nu,xz}(D,\nu)>0$}
		\State Solve \eqref{Psat_in_reduced} and apply $\Delta V_{xz}$.
		\ElsIf {$G_{y}(D,\nu)\geq \delta_y$ and $G_{\nu,y}(D,\nu)>0$}
		\State Solve \eqref{Psat_out_reduced} and apply $\Delta V_{y}$.
		\Else {}
		\State Wait.
		\EndIf
	\ElsIf {$D\notin \adset$ and $D\notin\mathcal{D}$}
	\State Apply the global stabilizing controller of \cite{Arantes2019}.
	\EndIf
	\end{algorithmic}
	\label{alg1}
\end{algorithm}

\section{Invariance of the single impulse approach}\label{invariance_single_impulse_approach}

In this section, the invariance of the primal single impulse approach is assessed. Firstly, the well-posed behaviour of the system is studied by using some fundamental results of hybrid impulsive systems. Next, some invariance results for the single-impulse approach, even in the case of continuous disturbances, are demonstrated. These proofs require checking certain conditions for both the out-of-plane and in-plane motion.

\subsection{Well-posedness for hybrid impulsive systems}

The impulsively controlled relative motion between two space vehicles can be recasted as an hybrid impulsive system composed of the continuous flow dynamics given by Eq.\eqref{eq:D_dyn} and the instantaneous state changes of Eq.\eqref{eq:jumpD}. As a consequence, the main results of \cite{Haddad2006} regarding invariance principles for hybrid impulsive systems apply to the case under study. Consider the following hybrid impulsive system $\mathcal{G}$
\begin{equation}\tag{$\mathcal{G}$} \label{eq:hybrid_system}
\begin{array}{llll}
&D'(\nu)&=A_D(\nu) D, \> \> D(0)=D_0 \in \mathcal{D}, \> \> &(\nu,D(\nu)) \notin \mathcal{Z},\\
&\Delta D(\nu)&=B_D(\nu)\Delta V(\nu), \> \> & (\nu,D(\nu)) \in \mathcal{Z},
\end{array}
\end{equation}
where $\Delta D$ denotes the instantaneous change on the state $D$ due to an impulse. For the previously presented single impulse approach, the so-called jump set $\mathcal{Z}$, is given by the trigger rules defined in Algorithm \ref{alg1}. Note that the problem is time-dependent, albeit the system coefficients are 2-$\pi$ periodic. 

One can precisely write the set $\mathcal{Z}$ for the out-of-plane and in-plane dynamics (resp. $\mathcal{Z}_y$ and $\mathcal{Z}_{xy}$), so that $\mathcal{Z}=\mathcal{Z}_y \times \mathcal{Z}_{xz}$:
\begin{eqnarray}
\mathcal{Z}_y&=&\left\{(\nu,D(\nu)):D\notin S_{D_y},~D \in \mathcal{D}_y,~G_y(D,\nu)\geq \delta_y,~G_{\nu,y}(D,\nu)>0
\right\},\\
\mathcal{Z}_{xz}&=&\left\{(\nu,D(\nu)):D\notin S_{D_{xz}},~D \in \mathcal{D}_{xz},~G_{xz}(D,\nu)\geq\delta_{xz},~G_{\nu,xz}(D,\nu)>0
\right\}.
\end{eqnarray}

The initial condition $D_0$ is assumed to lie in the admissible set region of attraction $\mathcal{D}=\mathcal{D}_y\times \mathcal{D}_{xz}$.
If the initial condition is not in $\mathcal{D}$, the stabilizing controller of~\cite{Arantes2019} is applied, which is out of the scope of this work, thus the case is not considered.

Following~\cite[Chapter~2,~p.~13]{Haddad2006}, the assumptions guaranteeing the well-posedness of the state jump time instants are satisfied. The first assumption (A1) states that the trajectory can only enter the jump set through a point that belongs to its closure but not from the jump set itself. However, due to the jump set $\mathcal{Z}$ form and noting the definition of $\mathcal D$, points on its closure but not in the set can only possibly  be on the boundary of $\adset$, where the dynamics is stationary (since $d_0=0$). Therefore, they cannot leave $\adset$ and enter the jump set. The closure of $\mathcal Z$ is described for a state $D\in\mathcal D$, by $G_i(D,\nu)=\delta_i$ and $G_{\nu,i}(D,\nu)>0$, being $i$ the subscript for $xz$ or $y$. Moreover, by definition, $G_i(D,\nu)$ is a continuous function in terms of $\nu$ and $D$. Consequently, the only way for the state $D$ to reach the jump set is to go through its closure. The second assumption (A2) from \cite[Chapter~2,~p.~13]{Haddad2006}, requires ensuring that when a trajectory intersects the jump set it exits $\mathcal{Z}$ without returning to it (at least for some finite time). In the proposed approach, the jump set, $\mathcal Z$, is defined outside the admissible set so that $\mathcal Z \cap \adset = \emptyset$. When the state trajectory enters the jump set, the computed control impulse sends back the state to the admissible set $\adset$. Therefore, the assumption (A2) is satisfied by the control law construction. As both assumptions are  verified, the resetting times are ensured to be well-defined. Zeno phenomena is precluded as well because after one jump the state is placed outside $\mathcal Z$ in the $\adset$ where the dynamics is stationary. All these arguments guarantee that the solution to (\ref{eq:hybrid_system}) exists and is unique over a forward time interval.

Next, the invariance of the single impulse approach is analyzed with and without disturbances. It holds that the trigger law makes the admissible set invariant and attractive, even in the presence of disturbances, under certain conditions of the involved sets. These conditions are assessed in the sequel taking into account the problem parameters $\{\underline{x},\overline{x},\underline{y},\overline{y},\underline{z},\overline{z}\}$, $e$, $\underline{\Delta V}$ and $\overline{\Delta V}$. 

\subsection{Invariance under the single impulse approach}\label{sec-invariant1}

Next, a result guaranteeing the existence of an attractive invariant set for (\ref{eq:hybrid_system}) is shown. It is not possible to use \cite[Chapter~2,~p.~38]{Haddad2006}) due to the time-varying nature of the system. 

\begin{assumption}\label{Assump1}
Every state in the neighborhood of $\adset$ can reach $\adset$ using one  unconstrained impulse over a 2-$\pi$ period.
\end{assumption}

In other terms, for a given state $D$ in the vicinity of $\adset$, the set $(D+\mathcal F^{\infty})\cap \adset \neq \emptyset$ where $\mathcal F^{\infty}$ is the state increment unconstrained reachable set (see the appendix \ref{ReachableSet}). For the out-of-plane motion $\mathcal F_y^{\infty}$ is the whole space $\R^2$. For the in-plane motion $\mathcal F_{xz}^{\infty}$ is a conic surface (see appendix \ref{ReachableSet} for definitions and descriptions).

\begin{remark}
If assumption \ref{Assump1} is not satisfied for some states in the closed vicinity of $\adset$, those states do not belong to $\mathcal D$ by definition. Consequently, there exist paths that escape $\adset$ without any opportunity to be steered back to $\adset$ with a single impulse. In such condition, invariance can not be guaranteed.
\end{remark}

In order to provide invariance results, the dead-zone set, $\mathcal D_{\text{dz}}$ , needs to be declared. This set is defined as the set of states from where all the $\adset$ reachability opportunities over a 2-$\pi$ period fall below the dead-zone threshold. This set may exist or not depending on the conditions given in appendix \ref{DeadzoneSet}.

\begin{theorem}\label{th-invariance} 
Consider the impulsive dynamical system given by \ref{eq:hybrid_system}.
Define the set $\mathcal M=\mathcal D \cup \adset$. If the dead-zone set is empty, $\mathcal D_{\textup{dz}}=\emptyset$, then for $D(0)\in\mathcal M$, it holds that $D(\nu)\rightarrow\mathcal M$ as $\nu\rightarrow\infty$. 
\end{theorem}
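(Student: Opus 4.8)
The plan is to prove invariance and attractivity of $\mathcal M=\mathcal D\cup\adset$ by a case analysis based on where the initial condition $D(0)$ lies and then tracking the hybrid trajectory through alternating flow and jump phases, using the well-posedness already established (existence, uniqueness, no Zeno) as the structural backbone. First I would dispose of the trivial case: if $D(0)\in\adset$, then since $\adset$ is contained in the equilibrium set ($d_0=0$ forces $D'(\nu)=A_D(\nu)D=0$ by Eq.~\eqref{eq:D_dyn}), the state is stationary, no jump is triggered ($\mathcal Z\cap\adset=\emptyset$), and the trajectory remains in $\adset\subset\mathcal M$ for all $\nu$. This immediately gives invariance on $\adset$.

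\paragraph{Main argument for $D(0)\in\mathcal D$.} The substantive case is $D(0)\in\mathcal D\setminus\adset$. Here I would argue that the flow dynamics steers the state while the triggering machinery guarantees a reaching opportunity. By the definition of $\mathcal D$ in Eq.~\eqref{eq:Dxz_set}--\eqref{eq:Dy_set}, the reachability-length integrals are nonzero over the next $2\pi$ window, so there exists some $\nu^\star\in(\nu,\nu+2\pi]$ at which $\adset$ is instantaneously reachable, i.e.\ $L_{xz}(\nu^\star)>0$ and $L_y(\nu^\star)>0$, equivalently $G_{xz}(\nu^\star)\geq 0$ and $G_y(\nu^\star)\geq 0$. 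Because $\delta_{xz}\leq 0$ and $\delta_y\leq 0$ lie strictly below the suprema in Eq.~\eqref{eq:def-delta}, the continuity of $G_{xz},G_y$ (established just after Eq.~\eqref{eq:epsilon_indicator}) forces these indicators to cross their thresholds \emph{from below} on an interval where $G_{\nu,xz}>0$ and $G_{\nu,y}>0$; this is exactly the condition that activates the jump set $\mathcal Z$ in Algorithm~\ref{alg1}. Hence within one period a jump is commanded, and by construction of the single-impulse programs \eqref{Psat_in_reduced}--\eqref{Psat_out_reduced} (whose feasibility is precisely the nonemptiness of $\Lambda^S_{\text{sat},xz}$, $\Lambda^S_{\text{sat},y}$ that the trigger condition certifies), the post-jump state satisfies $D^+\in\adset\subset\mathcal M$. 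The hypothesis $\mathcal D_{\text{dz}}=\emptyset$ is what rules out the pathological alternative in which every reaching opportunity over the period is filtered out by the dead-zone bound $\underline{\Delta V}$; with no dead-zone set, Assumption~\ref{Assump1} upgrades from unconstrained reachability to genuinely realizable control, so the jump truly lands in $\adset$ rather than stalling.

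\paragraph{Closing the loop and the convergence statement.} Once the state is inside $\adset$ after the first jump, the stationarity argument from the trivial case applies and the trajectory stays in $\adset$ thereafter, so in fact $D(\nu)$ reaches $\mathcal M$ in finite hybrid time and remains there; this yields both invariance of $\mathcal M$ (no trajectory starting in $\mathcal M$ leaves it permanently — it may flow through $\mathcal D$ but is caught within $2\pi$) and the attractivity $D(\nu)\to\mathcal M$. I would phrase the convergence carefully: since any excursion into $\mathcal D\setminus\adset$ is resolved by a jump back into $\adset$ within one period, the distance to $\mathcal M$ is zero at the jump instants, and between them the state is inside $\mathcal D\subset\mathcal M$ anyway, so $D(\nu)\in\mathcal M$ for all sufficiently large $\nu$, which is stronger than the stated limit. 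The main obstacle, and the step deserving the most care, is the \emph{guarantee that a jump is actually triggered before the reachability window closes}: one must verify that the threshold crossing with positive derivative genuinely occurs rather than $G_{xz}$ (or $G_y$) merely touching $\delta$ tangentially or peaking below it. This is where the strict inequalities in Eq.~\eqref{eq:def-delta} and the $2\pi$-periodicity of $B_D$ under the quasi-equilibrium behavior enter decisively, and where I would invoke $\mathcal D_{\text{dz}}=\emptyset$ to ensure the commanded impulse respects the saturation/dead-zone constraints in Eq.~\eqref{sat_dz_conds} while still completing the jump into $\adset$.
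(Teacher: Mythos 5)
Your proof takes a genuinely different route from the paper's. The paper's argument is purely set-geometric: it starts from Assumption \ref{Assump1}, notes that states sufficiently close to $\adset$ need only an impulse so small that $(D+\mathcal F_{\text{dz}})\cap\adset\neq\emptyset$, and then uses the \emph{definition} of the dead-zone set to conclude that $\mathcal D_{\text{dz}}=\emptyset$ forces $(D+\mathcal F_{\text{sat}})\cap\adset\neq\emptyset$ for every state in the closed neighborhood of $\adset$, i.e. $\partial\mathcal D\cap\partial\adset=\partial\adset$. The whole point is that there is no ``gap'' between $\adset$ and $\mathcal D$, so any trajectory leaving $\adset$ necessarily enters $\mathcal D$; the contractivity of $\mathcal D$ to $\adset$ is then simply asserted. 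Your proof inverts the emphasis: you track the hybrid trajectory and argue in detail that the trigger rules fire within one period from any $D\in\mathcal D$ --- precisely the step the paper leaves as an assertion --- which is a worthwhile addition. In exchange, you never establish the boundary-covering property that is the paper's actual content.

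This points to the concrete gap: you misattribute the role of the hypothesis $\mathcal D_{\text{dz}}=\emptyset$. For a state already in $\mathcal D$, the definition of $\mathcal D$ via the nonvanishing integrals of $L_{xz}$ and $L_y$ \emph{already} encodes dead-zone and saturation (the lengths are those of $\Lambda^S_{\text{sat},xz}$ and $\Lambda^S_{\text{sat},y}$), so no additional hypothesis is needed to make the commanded impulse ``genuinely realizable'' as you claim. Where the hypothesis is indispensable is in guaranteeing that every state in the closed vicinity of $\adset$ (in particular on $\partial\adset$) belongs to $\mathcal D$ rather than to a dead-zone gap where only sub-minimum-impulse-bit controls would reach $\adset$; without this, $\mathcal M=\mathcal D\cup\adset$ fails to contain a neighborhood of $\adset$ and the invariance claim collapses (this is exactly the content of the remark following the theorem). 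Two smaller slips: the equivalence ``$L_{xz}>0 \Leftrightarrow G_{xz}\geq 0$'' has the wrong sign --- by construction $G_{xz}\leq 0$ always, with $G_{xz}<0$ indicating a reachability margin and $G_{xz}\to 0$ as the window closes, which is why the trigger condition is $G_{xz}\geq\delta_{xz}$ with $\delta_{xz}\leq 0$ and $G_{\nu,xz}>0$; and the existence of a single $\nu^\star$ at which both $L_{xz}>0$ and $L_y>0$ simultaneously does not follow from $D\in\mathcal D_{xz}\times\mathcal D_y$, since the in-plane and out-of-plane windows need not overlap (the algorithm handles this by triggering the two impulses independently, and your argument should too).
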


\begin{proof}
Consider the assumption \ref{Assump1}. Therefore, for any state in the $\adset$ vicinity, there exist an unconstrained control steering the state back to $\adset{}$ over the next 2-$\pi$ period: $(D+\mathcal F^{\infty})\cap\adset\neq\emptyset$. For states very close to $\adset$, the control is small enough so that $(D+\mathcal{F}_{\text{dz}})\cap\adset\neq\emptyset$.
At this point, two cases are possible:
the set $(D+\mathcal F_{\text{sat}})\cap\adset$ may be empty or not.
Recalling that the dead-zone set is described by
\begin{equation}
	\mathcal D_{\text{dz}} = \{D\in\R^6 : (D+\mathcal{F}_{\text{dz}})\cap \adset \neq \emptyset~\wedge~(D+\mathcal F_{\text{sat}})\cap\adset = \emptyset\},
\end{equation}
then, if the dead-zone set is empty, any state in the closed neighborhood of $\adset{}$ verifies $(D+\mathcal F_{\text{sat}})\cap\adset\neq\emptyset$. In other terms, the closed neighborhood of \adset{} belongs to $\mathcal D$:
\begin{equation}
	\partial \mathcal D \cap \partial \adset = \partial \adset.
\end{equation}
Consequently, any $\adset$ escape trajectory is guaranteed to enter $\mathcal D$. Noting that the set $\mathcal D$ is contractive to $\adset{}$ terminates the proof.
\end{proof}

Theorem \ref{th-invariance}  indicates that, under the event-based control laws, the admissible set is attractive and invariant and the union set $\mathcal M$ is invariant.

\begin{remark}
If the dead-zone set is non empty, some states at the boundary of $\adset{}$ belong to $\mathcal {D}_{\textup{dz}}$ so that $\partial \mathcal D \cap \partial \adset \neq \partial \adset$ and $\partial \mathcal D_{\textup{dz}} \cap \partial \adset \neq \emptyset$. Then, there exist $\adset$ escape trajectories without entering the region of attraction $\mathcal D$. Therefore, the invariance property can not be generally ensured by the local controller and the global control may be triggered.
\end{remark}

\paragraph{Invariance under disturbances}

Next, the hybrid  system $\mathcal{G}$ is modified to account for continuous dynamical disturbances
\begin{equation}\tag{$\mathcal{G}'$} \label{eq:hybrid_system2}
\begin{array}{llll}
&D'(\nu)&=A_D(\nu) D+w(\nu,D(\nu)), \> \> D(0)=D_0 \in \mathcal{D} \> \> &(\nu,D(\nu)) \notin \mathcal{Z},\\
&\Delta D(\nu)&=B_D(\nu)\Delta V(\nu), \> \> & (\nu,D(\nu)) \in \mathcal{Z},
\end{array}
\end{equation}
where $w(\nu,D(\nu))$ is an unknown Lipschitz continuous disturbing function, which is assumed to behave in a way such that it does not modify the assumptions validity guaranteeing the well-posedness of the jump times and the absence of Zeno behaviour for system~\ref{eq:hybrid_system}. Firstly, the invariance can only be ensured if theorem \ref{th-invariance} conditions hold. Then, $\adset{}$ is attractive and the set $\mathcal M$  is an invariant set under the following condition: the continuous disturbances function $w(\cdot)$ has  to be bounded such that $\varphi_D(\nu,D_0,w)\in \mathcal M$ for $D_0\in\adset$ and $\nu\in [\nu_0,~\nu_0+2\pi]$, being $\varphi_D$ the bundle of flow trajectories under the disturbances. Such a conjecture is only a necessary to ensure invariance under disturbances. As a matter of fact, remaining in the $\adset$ region of attraction, $\mathcal D$, ensures that an opportunity will raise during the next orbital period as the jump set $\mathcal Z$ lives in $\mathcal D$. However, the case where the disturbances causes the state to drift away $\mathcal{D}$, before such opportunity comes up, is also probable.

\section{Numerical experiments}
\label{Results}

In this section, the simulation results using the proposed event-based controller are presented. The section is divided in three parts to analyze the effect of the main parameters ($e$, $\underline{\Delta V}$, $\overline{\Delta V}$) affecting the controller performances. 
The simulations are run in MATLAB, with an i7-8700 3.2 GHz CPU, using the Simulink model \cite{SimMatlab} which is based on the non-linear Gauss variational equations accounting for Earth $J_2$ oblateness effects and atmospheric drag as disturbances sources $w$.

The scenario parameters common to all the simulations are the following: the Earth gravitational constant ($\mu=398600.4~\text{km}^3/\text{s}^2$), the leader orbit elements ($h_p=605~\text{km}$, $i=98^{\circ}$, $\Omega=0^{\circ}$, $\omega=0^{\circ}$), the leader initial true anomaly ($\nu_0=0^{\circ}$), the follower initial relative state on the LVLH frame ($X=[300~\text{m},~400~\text{m},~-40~\text{m},~0~\text{m/s},~0~\text{m/s},~0~\text{m/s}]^T$) and the hovering box bounds ($\xmin=50~\text{m}$, $\xmax=150~\text{m}$, $\ymin=-25~\text{m}$, $\ymax=25~\text{m}$, $\zmin=-25~\text{m}$, $\zmax=25~\text{m}$). Regarding atmospheric drag, the target and chaser ballistic coefficients are assumed to be similar to the International Space Station, $B_t=139.80~\text{kg/m}^2$, and Automated Transfer Vehicle, $B_c=175.90~\text{kg/m}^2$.

The vehicle starts outside $\adset$ and reaches the admissible set during an approach phase (carried out by the global stabilizing controller). Once the vehicle enters $\adset$, the hovering phase (which is the scope of this work) begins with the activation of the event-based controller. 

Regarding the event-based controller parameters, the trigger rules are evaluated at a sampling rate of ($\Delta \nu=1^{\circ}$), the $\adset$ instantaneous reachability thresholds are chosen as ($\delta_{xz}=-3$, $\delta_y=-100$) and the discretization parameter to evaluate Eq.\eqref{SufCond_Din} is taken as ($n_L=100$). The hovering phase lasts for 10 target orbit periods ($\nu_f=\nu_0'+10\cdot2\pi$). Note that $\nu_0'$ is the hovering phase initial instant. On the other hand, the global stabilizing controller parameters are the number of impulses ($N=3$), true anomaly interval between impulses ($\tau_I=30^{\circ}$), true anomaly interval between sequence of impulses ($\tau_S=5^{\circ}$) and true anomaly interval to achieve periodicity ($\tau_E=5^{\circ}$), see \cite{Arantes2019}. The studied parameters are the eccentricity $e$, the dead-zone $\underline{\Delta V}$ and the saturation $\overline{\Delta V}$ in paragraphs \ref{e_results}, \ref{DZ_results} and \ref{Sat_results} respectively. The specific parameters values are detailed in each paragraph.  

\subsection{Impact of the eccentricity}\label{e_results}

Due to its important role in the relative dynamics, see Eq.\eqref{eq:D_dyn}, the control matrix $B_D$ given by Eq.\eqref{eq:B_D}, and the admissible set description, see Eq.\eqref{eq:xmin_envelope_interior}-\eqref{eq:zmax_envelope}, the eccentricity impact is  assessed in this section. To this end, 50 simulations for different eccentricities equispaced between 0 and 0.6 have been carried out. The dead-zone and saturation values are chosen as $\underline{\Delta V}=10^{-3}~\text{m/s}$ and $\overline{\Delta V}=0.1~\text{m/s}$.

Firstly, the event-based controller behavior is studied and then compared with the global controller proposed in \cite{Arantes2019}. Simulation results for $e=0$ and $e=0.6$ are presented in Fig.\ref{fig:3D_trajectory}. At each case, only the hovering phase trajectory is represented and analysed. Let recall, that the approach phase is ensured using a proper controller as the globally stabilizing one from \cite{Arantes2019}. In Fig.\ref{subfig:3D_trajectory_e_00} the polytopic constraints are respected while in Fig.\ref{subfig:3D_trajectory_e_06} a violation arises near the upper left corner though the trajectory naturally returns to the hovering region. However, the global controller is not triggered. This translates to the fact that even if the admissible set is not reachable (the spacecraft is outside of the hovering zone) it will be in the next period ($D\in\mathcal{D}$). The event-based controller behavior can be seen in \figref{fig:intersections_impulses}. This figure shows the instantaneous reachability signals $G_{xz}$ and $G_y$ along with the triggered impulses $\Delta V_{xz}$ and $\Delta V_y$. The instantaneous reachability signals $G_{xz}$ and $G_y$ evolve quasi-periodically. This confirms the assumption that the state is in the equilibrium set (or very close to it) and the range of control evolves periodically as suggested by the control matrix $B_D$, see Eq.\eqref{eq:B_D}.

Figure \ref{fig:e_calls} counts the number of calls to the single impulse controller and to the global controller made by  the event-based algorithm during the hovering phase. Consequently, the number of triggering events are counted. This number is between 6 and 19 with an average of 10.3 events. One can note that the global controller is never called during all the simulated cases. In other terms, the use of the local event-based controller (trough the primal single impulse approach) permits to not lose track of the admissible set. This illustrates the robustness and invariance of the set $\mathcal{D}$ even in the presence of disturbances. 

\begin{figure}[] 
	\begin{center}
		\subfigure[]{\includegraphics[width=12cm,height=12cm,keepaspectratio]{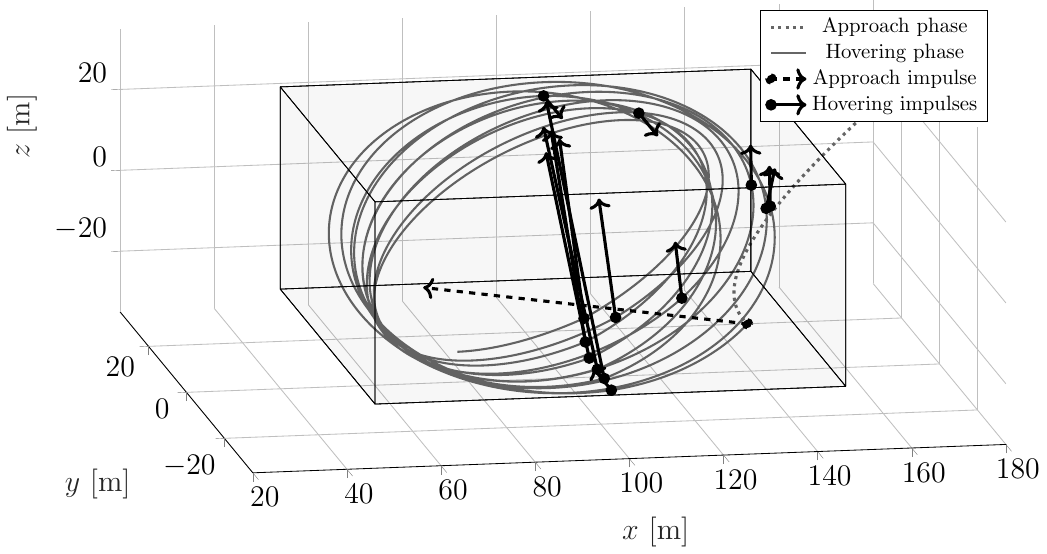} \label{subfig:3D_trajectory_e_00}}
		\subfigure[]{\includegraphics[width=12cm,height=12cm,keepaspectratio]{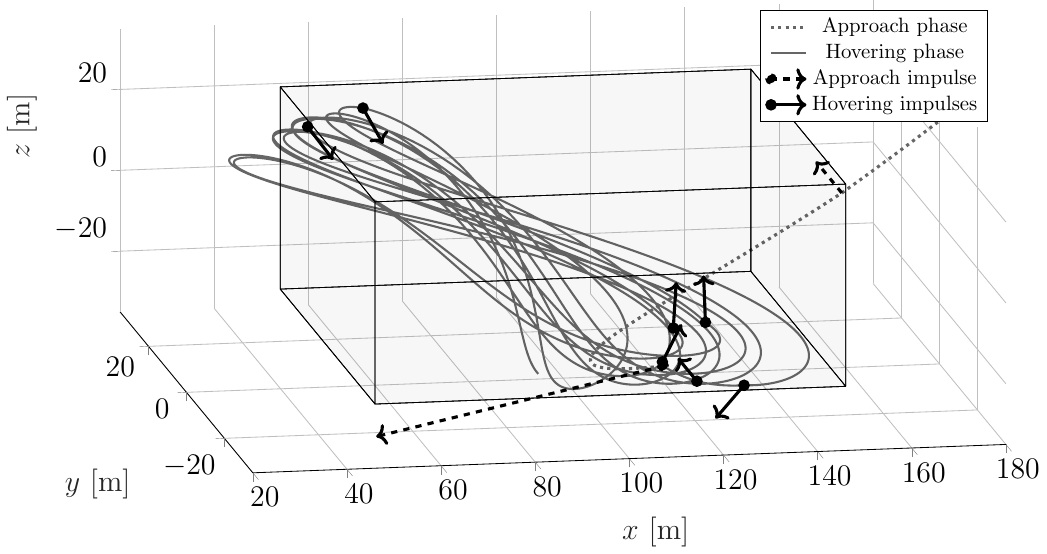} \label{subfig:3D_trajectory_e_06}}
		\caption{Trajectory for (a): $e=0$, (b): $e=0.6$. Hovering impulses scale 10:1 with respect to approach impulses.}
		\label{fig:3D_trajectory}
	\end{center}
\end{figure}

\begin{figure}[] 
	\begin{center}
		\subfigure[]{\includegraphics[width=13cm,height=13cm,keepaspectratio]{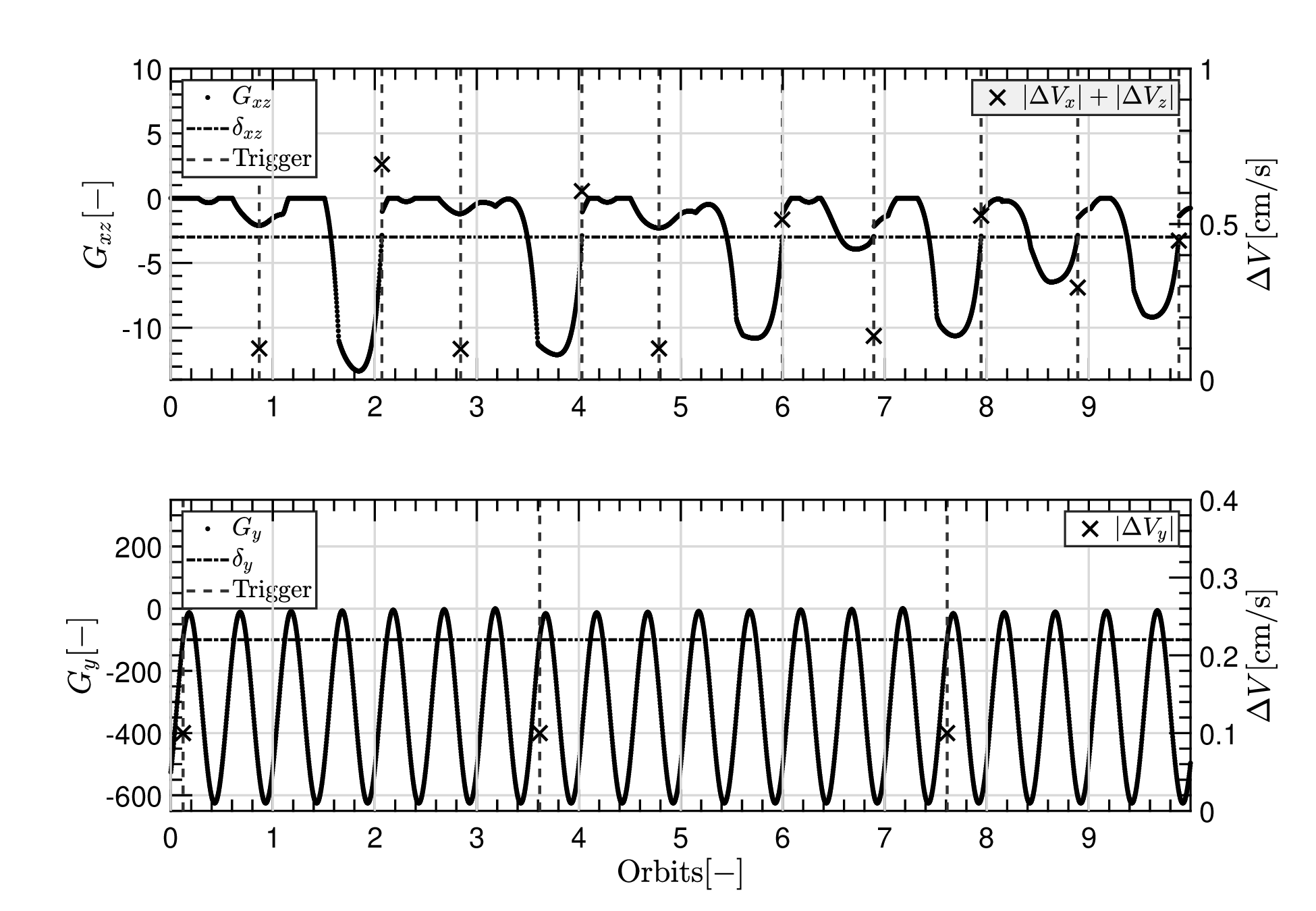} \label{subfig:intersections_impulses_e_00}}
		\subfigure[]{\includegraphics[width=13cm,height=13cm,keepaspectratio]{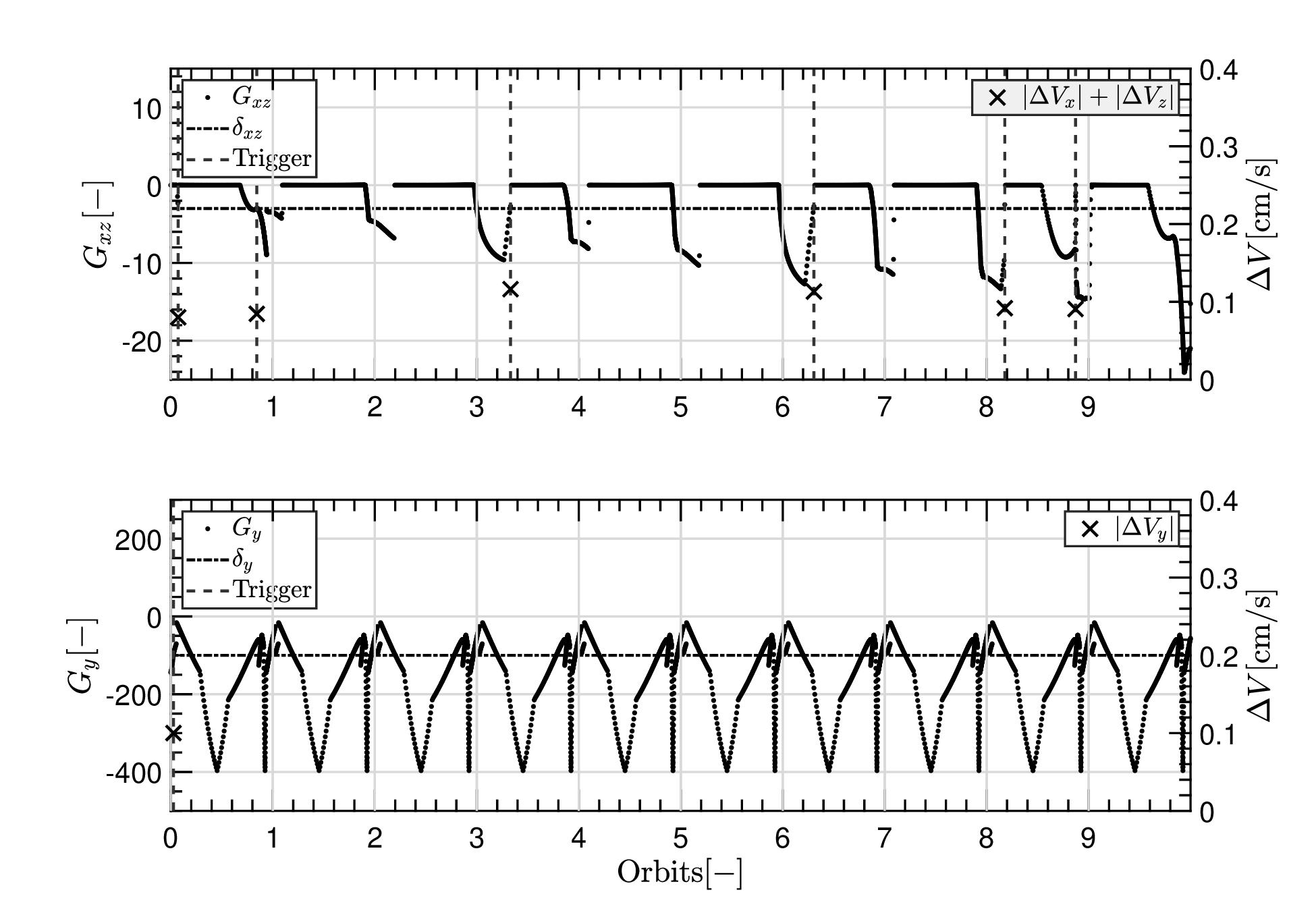}\label{subfig:intersections_impulses_e_06}}\qquad
		\caption{The variables $G_{xz}$, $G_y$, $\lVert \Delta V_{xz}\rVert_1$ and $\lVert \Delta V_y\rVert_1$ for (a): $e=0$, (b): $e=0.6$.}
		\label{fig:intersections_impulses}
	\end{center}
\end{figure}

\begin{figure}[] 
	\begin{center}
		\includegraphics[width=9.5cm,height=9.5cm,keepaspectratio]{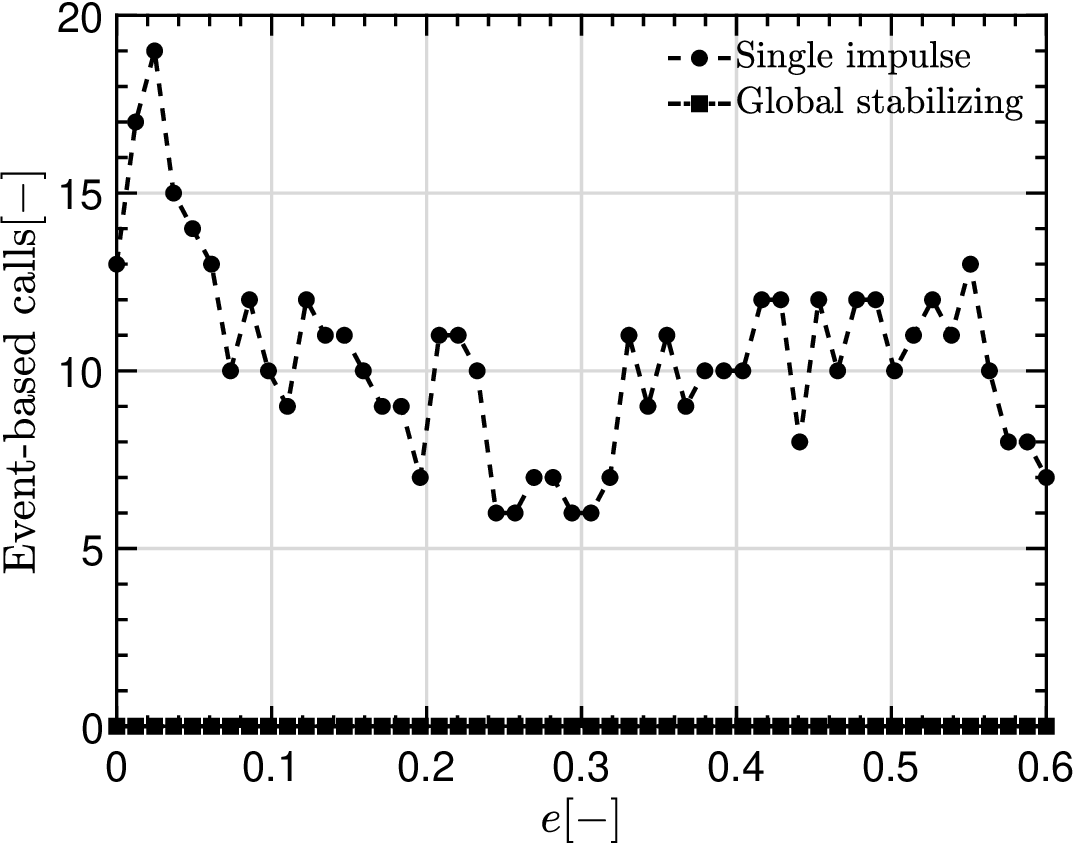}
	\end{center}
	\caption{Number of calls to each control law for the event-based controller.}	
	\label{fig:e_calls}
\end{figure}

Then, the event-based controller (by even studying very reactive thresholds $\delta_{xz}=-300$, $\delta_y=-10^4$) performance is compared with the periodic global stabilizing controller (considering two different tuning parameter values $\tau_I=30^{\circ},5^{\circ}$) proposed in \cite{Arantes2019}. The controllers accuracy is presented in Fig.\ref{fig:e_poly}. During the hovering phase, the hovering condition (the spacecraft position respects the polytopic constraints) is tested. Let recall that the relative state $D$ belonging to the admissible set, $\adset$, implies that the relative state position is within the hovering zone. Figure \ref{fig:e_poly} shows the percentage of time the chaser spacecraft remains within the hovering zone (cuboid). The event-based controller shows its superiority by ensuring an accuracy over 96.00\% (97.91\% for reactive thresholds) of the hovering time in almost all cases with an average value of 98.66\% (99.62\% for reactive thresholds). For the other controllers, the hovering mean time results are 95.43\% and 91.38\% (for $\tau_i=5^\circ$ and $\tau_i=30^\circ$ respectively). These values may be unacceptable, since each one-percent of constraints violation corresponds to an excursion time between $10~\text{minutes}$ (for $e=0$) and $40~\text{minutes}$ (for $e=0.6$). Note that the target perigee altitude is fixed, thus the semi-major axis and orbit period increase with the eccentricity. An explanation for this accuracy loss is that the global controller does not account for the thrusters minimum impulse bit. Consequently, most of the computed controls are not executed due to the dead-zone filtering (impulses below the dead-zone value are nullified) as it can be observed in Fig.\ref{fig:e_N}. As a matter of fact, the global controllers compute a total number of 723 and 203 impulsive controls during the hovering phase (depending on the parameter $\tau_i$ but not on the eccentricity value). However, only 3 to 21 of them are relevant with respect to the minimum impulse bit. On the contrary, the event-based controller computes between 6 and 19 controls for the nominal case while the reactive controller triggers 2 impulses more in average. It is worth noting that the number of relevant impulses is of the same magnitude for all the controllers while the accuracy clearly improves with the event-based approach.

The hovering phase fuel consumption, measured as $J=\sum^N_{i=1}||\Delta V||_1$, is shown in Fig.\ref{fig:e_J}. The cost is shown for both the event-based controller and global controllers (before and after dead-zone filtering). Before the dead-zone filtering (w/o filter), only the computed consumption is shown. The actual fuel consumption corresponds to the cases after dead-zone filtering. The event-based controller consumption is almost equivalent to the global controller actual consumption. More precisely, the event-based control consumes, at most, less than $4.5~\text{cm/s}$ for $0\leq e\leq0.1$, and less than $2~\text{cm/s}$ for $e>0.1$ typically. Similar results are yielded by the reactive thresholds which assure less than $3~\text{cm/s}$ for $0\leq e\leq0.1$ and the same performance for $e>0.1$ in average.

One of main arguments for the development of an event-based control algorithm is the enhancement in terms of numerical efficiency. The event-based controller computational load is composed of the trigger rules evaluation (at the specified sampling rate) and the in-plane or out-of-plane controls computation (when required). The most computationally consuming task is the trigger rules evaluation lasting an average of 5.895 miliseconds (see Table \ref{table_cputime}), whilst the computation times of the in-plane and out-of-plane controls are negligible in comparison (with average values of $0.0299~\text{ms}$ and $0.1314~\text{ms}$ respectively). One can note that the worst trigger rules computational time highly differs from the mean. This can be explained by the fact that in the case where the reachability over one period has to be computed through Eq.\eqref{eq:reachability_one_period} the trigger rules evaluation time increases significantly.

\begin{table}[h]
	\centering
	\begin{tabular}{lcccc}
		\hline \hline
		\multicolumn{1}{r}{[ms]} & Mean & Standard deviation & min & max\\
		\hline
		Trigger rules & 5.895 & 15.91 & 0.622 & 118.6\\ 
		In-plane control & 0.0299 & 0.0624 & 0.0270 & 0.5813  \\ 
		Out-of-plane control & 0.1314 & 0.0299 & 0.0209 & 0.4824\\ 
		\hline \hline
	\end{tabular}
	\caption{Event-based computation times.}
	\label{table_cputime}
\end{table}

Table \ref{table:e_tcpu} shows the cumulated computation times during the hovering phase (10 periods) for the different controllers. The cumulated computation times for the event-based controller are between 6 and 79 seconds, whilst, the global predictive controller spends between 223 and 266 seconds (for $\tau_i=5^{\circ}$), and between 60 and 72 seconds (for $\tau_i=30^{\circ}$). The superiority of the event-based controller with respect to the global controllers is justified by the fact that at five times ($\Delta\nu=1^{\circ}$) the global controller sampling rate ($\tau_I=5^{\circ}$), the event-based controller computational time is 20 times lower (in average). By selecting $\tau_i=30^\circ$, the global controller computational time is 3 times higher than the event-based one while the control accuracy is significantly lower with no gain in fuel consumption to mitigate these observations.

\begin{table}[h]
	\centering
	\begin{tabular}{lcccc}
		\hline \hline
		\multicolumn{1}{r}{[s]} & Mean & Standard deviation & min & max\\ \midrule
		Event-based controller & 21.223 & 13.831 & 5.811 & 79.128\\ 
		Global controller $\tau_i=5^\circ$ & 245.60 & 11.231 & 222.98 & 265.84\\ 
		Global controller $\tau_i=30^\circ$ & 66.500 & 3.143 & 60.427 & 72.191 \\
		\hline \hline
	\end{tabular}
	\caption{Cumulated computation times for the hovering phase.}
	\label{table:e_tcpu}
\end{table}

\begin{figure}[] 
	\begin{center}
		\includegraphics[width=10cm,height=10cm,keepaspectratio]{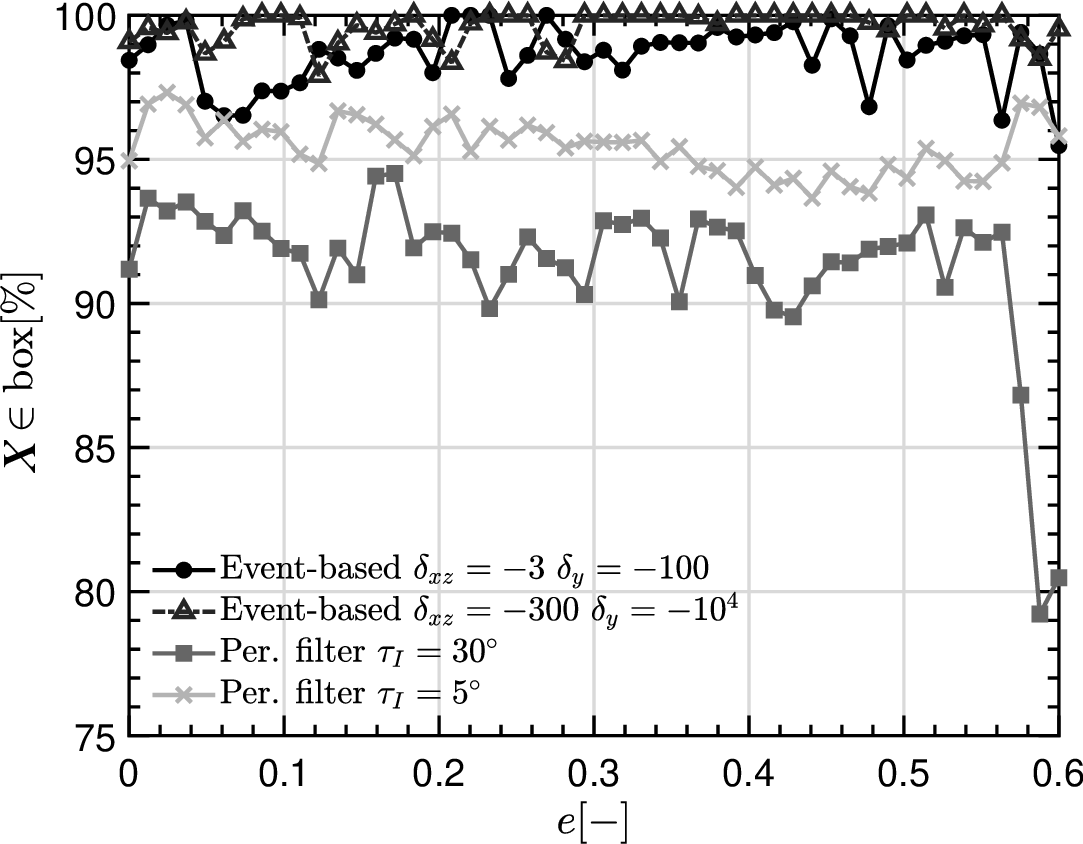}
	\end{center}
	\caption{Polytopic constraints satisfaction time percentage of the event-based and periodic controllers for different eccentricities.}	
	\label{fig:e_poly}
\end{figure}

\begin{figure}[] 
	\begin{center}
		\includegraphics[width=10cm,height=10cm,keepaspectratio]{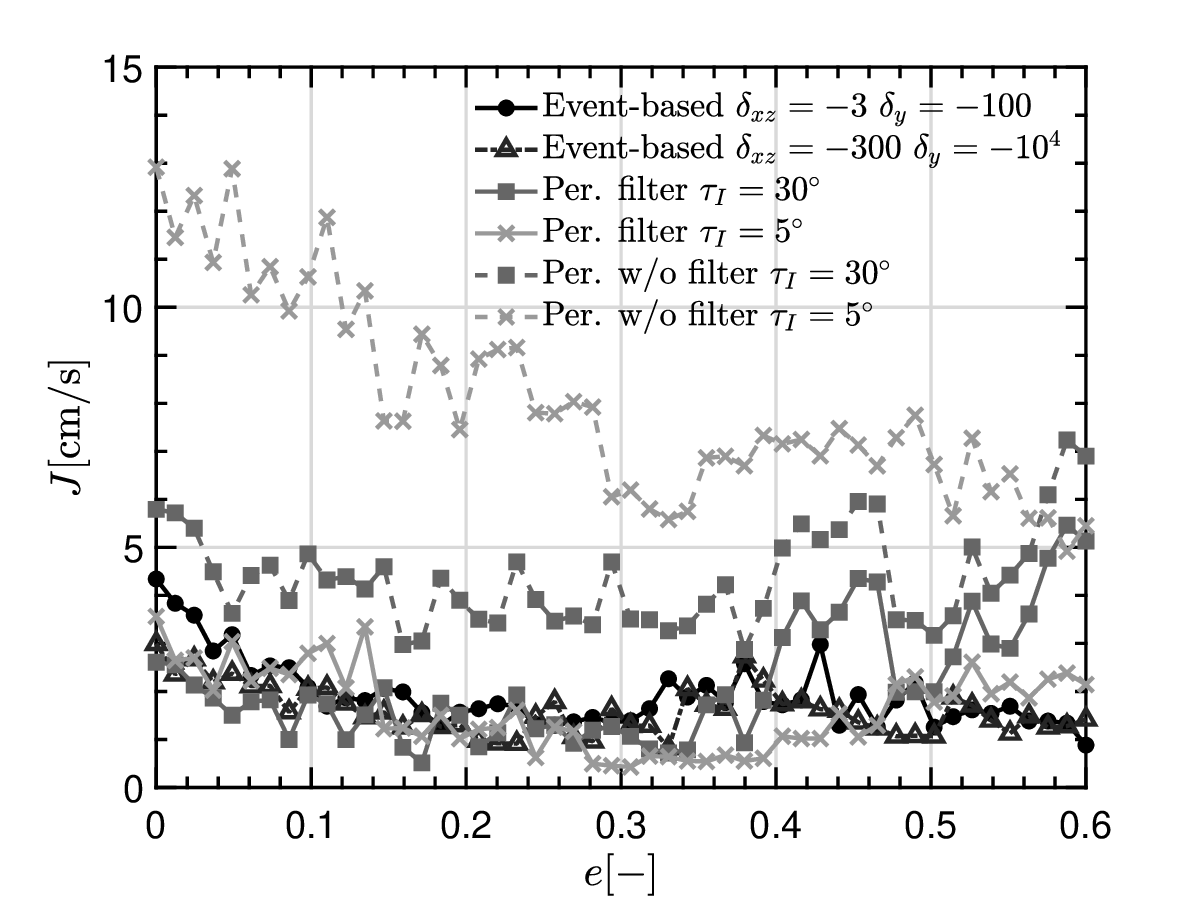}
	\end{center}
	\caption{Cost of the event-based and periodic controllers for different eccentricities. Filter$\equiv$impulses below dead-zone nullified; w/o filter$\equiv$impulses below dead-zone not nullified.}	
	\label{fig:e_J}
\end{figure}

\begin{figure}[] 
	\begin{center}
		\includegraphics[width=10cm,height=10cm,keepaspectratio]{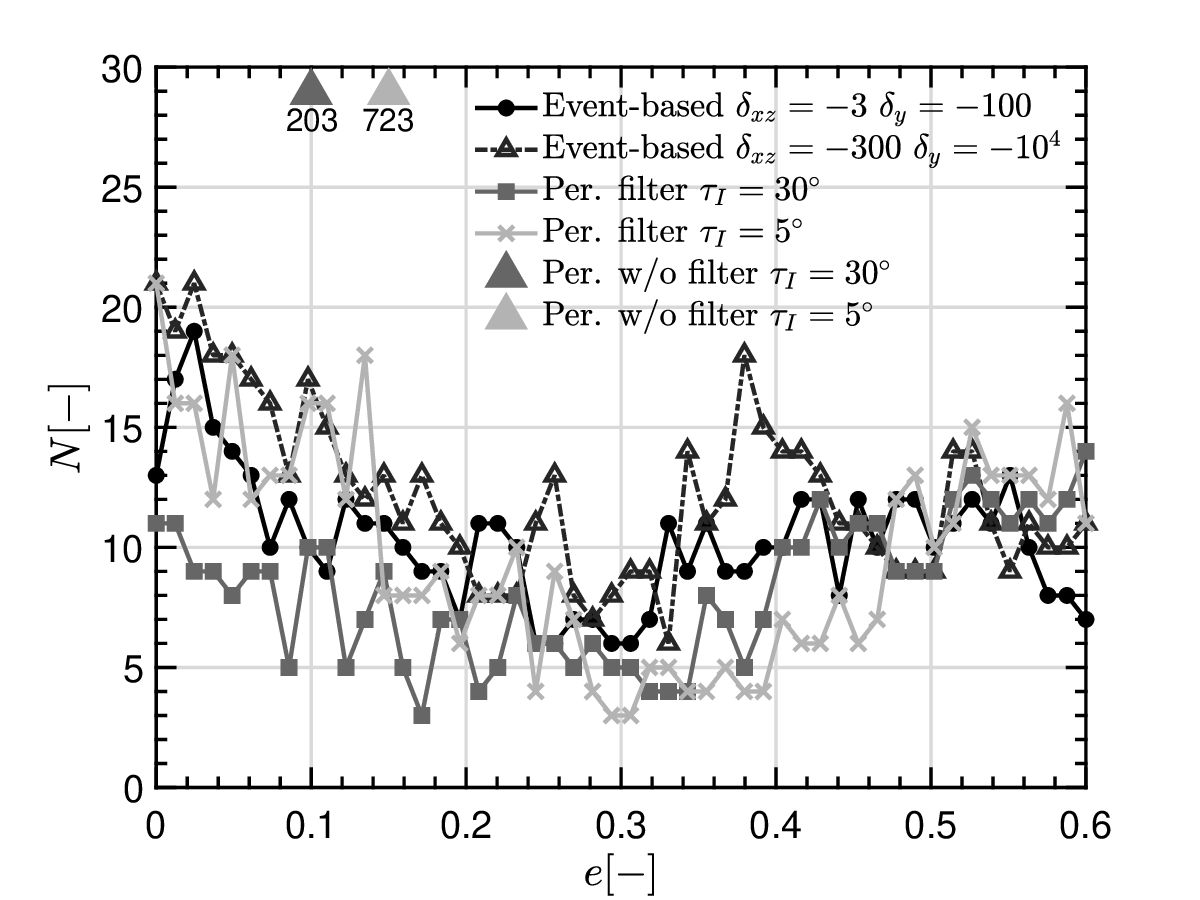}
	\end{center}
	\caption{Number of impulses of the event-based and periodic controllers for different eccentricities. The upper part of the figure shows the computed number of impulses for the periodic controllers (independent of eccentricity). Filter$\equiv$impulses below dead-zone nullified; w/o filter$\equiv$impulses below dead-zone not nullified.}	
	\label{fig:e_N}
\end{figure}

\subsection{Impact of the dead-zone}\label{DZ_results}

As shown in Section \ref{invariance_single_impulse_approach}, the dead-zone can significantly impact the event-based controller behavior. This is the reason why a parametric analysis on $\underline{\Delta V}$ is carried out for $e=0.004$ and $\overline{\Delta V}=0.1~\text{m/s}$. The dead-zone impact is assessed by simulating 50 values of $\underline{\Delta V}$, logarithmically equispaced between $10^{-4}~\text{m/s}$ and $10^{-2}~\text{m/s}$.  

\begin{figure}[] 
	\begin{center}
		\includegraphics[width=9.5cm,height=9.5cm,keepaspectratio]{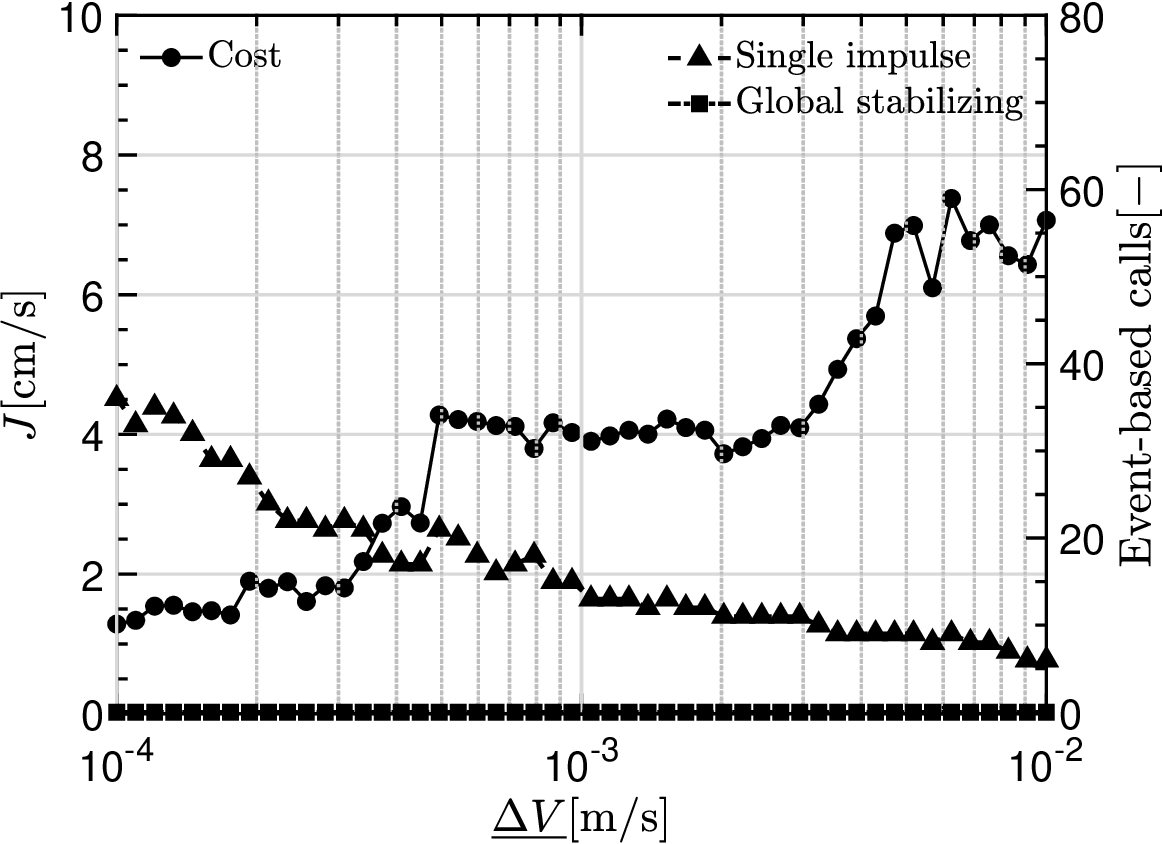}
	\end{center}
	\caption{Cost and control calls of the event-based controller for different dead-zone values.}	
	\label{fig:dVmin_J_calls}
\end{figure}
\begin{figure}[] 
	\begin{center}
		\includegraphics[width=9.5cm,height=9.5cm,keepaspectratio]{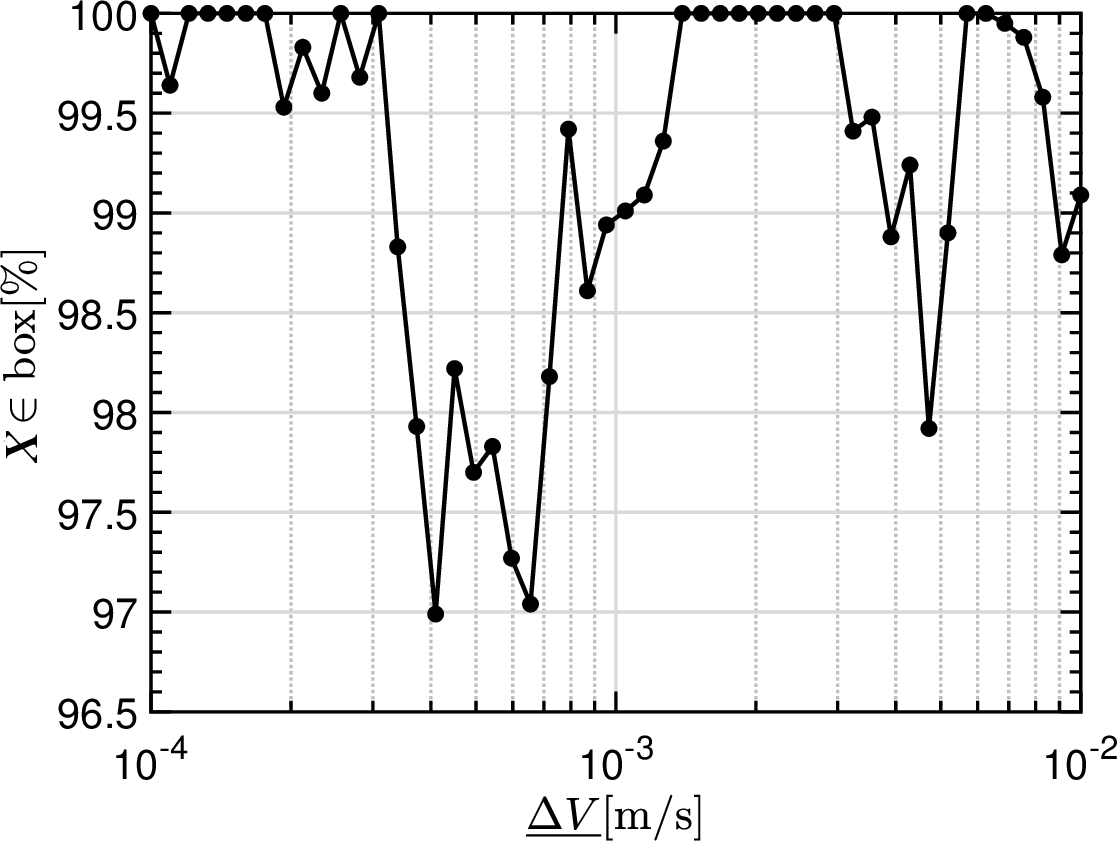}
	\end{center}
	\caption{Polytopic constraints satisfaction time percentage of the event-based controller for different dead-zone values.}	
	\label{fig:dVmin_poly}
\end{figure}
Figure \ref{fig:dVmin_J_calls} shows that the hovering fuel consumption increases gradually from $1.5~\text{cm/s}$ to $3~\text{cm/s}$ when the dead-zone threshold reaches $0.45~\text{mm/s}$. Above this value, a quick raise to $4~\text{cm/s}$ happens. This consumption stays steady until the dead-zone threshold reaches $3~\text{mm/s}$ from where it increases gradually to $7~\text{cm/s}$. In a general way, the fuel consumption trend is to increase as the dead-zone is enlarged. Another trend showed by Fig.\ref{fig:dVmin_J_calls} is decrease of the impulses number as $\underline{\Delta V}$ increases. This highlights a reduction of triggering opportunities when the dead-zone band is higher. In other terms, the event-based control algorithm is less sensitive and reactive as the dead-zone band is enlarged. However in the proposed study, the reactivity is not directly correlated with the accuracy of the control scheme. In fact, \figref{fig:dVmin_poly} shows that the hovering zone constraint satisfaction time percentage is  globally high with values above $97\%$ (with several ones of $100\%$). It is also seen, in Fig.\ref{fig:dVmin_J_calls}, that the single impulse strategy is always commanded at all cases. It can be concluded that the behavior of the event-based controller could be affected in terms of reactivity and consumption if the dead-zone band is enlarged but the control performances remain reasonably good. Note that the dead-zone value is an intrinsic property of the chosen spacecraft thrusters for the mission. Nonetheless, a different value can be set in the control algorithm to tune the behavior of the controller and thrusters.

\subsection{Impact of the saturation}\label{Sat_results}

Finally, for a given dead-zone threshold, $\underline{\Delta V}=10^{-3}~\textup{m/s}$, and eccentricity, $e=0.004$, the impact of the saturation value $\overline{\Delta V}$ is analyzed. To this end, 50 saturation values logarithmically equispaced between $10^{-3}~\textup{m/s}$ and $10^{-1}~\textup{m/s}$ are evaluated.

\begin{figure}[] 
	\begin{center}
		\includegraphics[width=9.5cm,height=9.5cm,keepaspectratio]{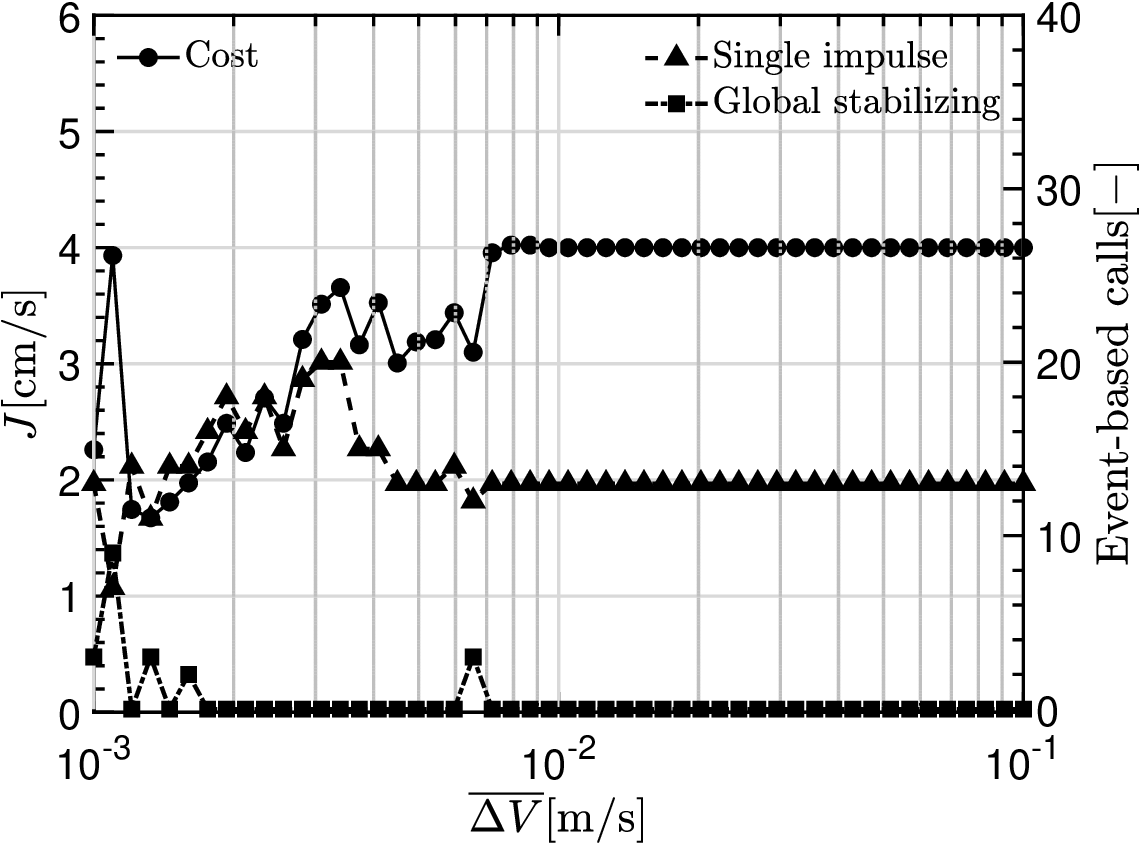}
	\end{center}
	\caption{Cost and control calls of the event-based controller for different saturation values.}	
	\label{fig:dVmax_J_calls}
\end{figure}
\begin{figure}[] 
	\begin{center}
		\includegraphics[width=9.5cm,height=9.5cm,keepaspectratio]{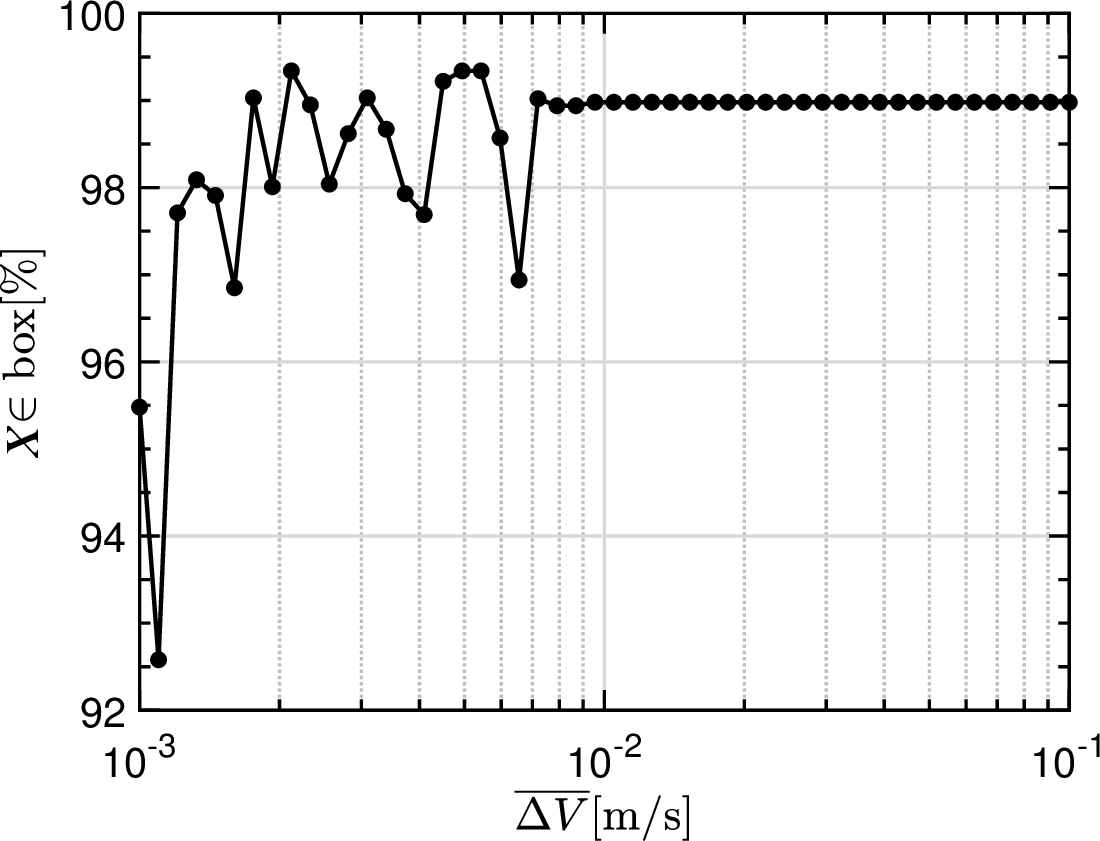}
	\end{center}
	\caption{Polytopic constraints satisfaction time percentage of the event-based controller for different saturation values.}	
	\label{fig:dVmax_poly}
\end{figure}

In Fig.\ref{fig:dVmax_J_calls}, two different trends can be observed. Firstly, when the saturation value is lower than $1~\text{cm/s}$, the number of event-based calls increases as the saturation band enlarges. When the saturation value is closer to the dead-zone value, the global controller is called frequently due to the lesser available control region. The fuel consumption follows exactly the same trend. It is also observed, in \figref{fig:dVmax_poly}, a correlation between global controller calls and the loss of control accuracy. Additionally, the accuracy is maximal and steady when the saturation value enlarges the control region enough.

If the saturation value is greater than 1 cm/s, the fuel consumption, number of control calls and accuracy remain steady. It is concluded that the saturation does not affect the behavior and control performances above a given value. This fact can help to choose the thruster with respect to its capabilities.

\section{Conclusions} 

In this paper, an event-based predictive controller for the spacecraft rendezvous hovering phase has been presented. The event-based control architecture is composed of trigger rules calling a suitable control law. The trigger rules are based on the monitoring the state and the admissible set reachability and the preferred control law is a single impulse approach. 
For instance, it has been highlighted that the resulting event-based algorithm has a high computational efficiency. In fact, the computations involve polynomial roots computation (for the instantaneous reachability) and the cost evaluation in a small number admissible solutions of a given small dimensional linear program. This fact makes the proposed controller suitable for the embedness on spacecraft computational devices. 

The properties single impulse control approach have been assessed in Section \ref{invariance_single_impulse_approach},% by using hybrid impulsive systems theory. 
The well-posedness of the proposed controller was demonstrated. Moreover, under nominal scenario parameters the single impulse strategy, along with the associated trigger rules, makes the admissible set attractive (even in the presence of weak enough disturbances). This fact has also been remarked under numerical simulations with dynamical disturbances. These simulations emphasize the superiority of the event-based controller performances in terms of accuracy and computational efficiency (with similar fuel consumption and number of relevant impulses) when compared to the global stabilizing method of \cite{Arantes2019}. Finally, it has been highlighted that the thruster capabilities, via dead-zone and saturation parameters, may have an impact in the mission performance.

An extension of this work could possibly be to robustify the control and trigger rules with respect to the presence of impulses mishaps. To this end, a robust formulation with disturbance estimation as \cite{Gavilan2012} combined with state uncertainty minimization, as in \cite{Louembet2015}, could be considered. Finally, another possible future work is the explicit consideration of Earth's oblateness within our formulation by adding the periocidity condition given by \cite{Roscoe2011}.

\section*{Acknowledgments}

The authors gratefully acknowledge financial support from University of Seville, through its V-PPI US, under grant PP2016-6975 and from the Spanish Ministry of Science and Innovation under grant PGC2018-100680-B-C21.

\bibliography{event_control_bibliography}

\appendix
\section{Appendix}

The aim of this appendix is to expose the geometry of the 2-$\pi$ period reachable sets. These sets are relevant to describe the region of attraction $\mathcal D$ and thereafter the so-called dead-zone set, $\mathcal D_{\text{dz}}$. %If the region of attraction   $\mathcal D$ is defined in section \ref{reachability_one_period}, 
Moreover, conditions of existence of the dead-zone set are provided. These conditions depend mainly to the dead-zone threshold $\underline{\Delta V}$.

\subsection{Reachable set over one period} \label{ReachableSet}

Following \cite{Sanchez2019_bis}, it is convenient to define the reachable set over a 2-$\pi$ period $\mathcal{F}=\mathcal{F}_{xz}\times\mathcal{F}_y$. The reachable set is expressed with respect to the state increment $\Delta D=D^+-D\in\R^6$. 

\subsubsection{Out-of-plane}

The out-of-plane state increment is defined as $\Delta D_y=D^+_y-D_y=[\Delta d_4,~\Delta d_5]^T$ 
\begin{equation}
\Delta D_y(\nu,\lambda_y)=\lambda_yB_{D,y}(\nu)=\frac{\lambda_y}{k^2\rho}[
-s_{\nu},~c_{\nu}]^T. \label{eq:inc_outplane}
\end{equation}
Exploiting \cite{Hong1995}, an implicite form of \eqref{eq:inc_outplane} with respect to the independent variable is obtained:
\begin{equation}\label{eq:out_control_ellipse}
f_{y}(\Delta D, \lambda_y)=\dfrac{\Delta d_4^2}{\left(\dfrac{\lambda_y}{k^2\sqrt{1-e^2}}\right)^2}+\dfrac{\left(\Delta d_5+\dfrac{e \lambda_y}{k^2(1-e^2)}\right)^2}{\left(\dfrac{\lambda_y}{k^2(1-e^2)}\right)^2}-1=0.
\end{equation}
Equation \eqref{eq:out_control_ellipse} describes the equation of an ellipse for given fixed parameters $e,~\lambda_y$.
% which, for a given $\lambda_y$, it represents the curve (ellipse) of possible out-of-plane state increments over a 2-$\pi$ period. 
Define the out-of-plane state increment reachable set as
\begin{equation}
\mathcal{F}_y(\Lambda):=\{\Delta D_y \in \R^2: f_{y}(\Delta D_y, \lambda_y)=0,~\forall \lambda_y \in \Lambda\subseteq \R \},
\end{equation}
where the set $\Lambda$ refers to the allowable values for the out-of-plane control, $\lambda_y$. If both dead-zone and saturation constraints are taken into account (see \eqref{eq:Isat_y}), then the reachable set is fonction of the interval set $\Lambda_{\text{sat},y}$
\begin{equation}\label{eq:incset_reachable_y}
\mathcal{F}_{\text{sat},y}=\mathcal{F}_y(\Lambda_{\text{sat},y}).
\end{equation}
In a similar way it is useful to describe the dead-zone reachable set, states increments with a control below the dead-zone threshold $\underline{\Delta V}$, and the unconstrained reachable set as respectively
\begin{equation}
\mathcal{F}_{\text{dz},y}=\mathcal{F}_{y}([-\underline{\Delta V},~\underline{\Delta V}]),~~\text{and}~~ \mathcal{F}^{\infty}_y=\mathcal{F}_y(\R).
\end{equation}
Remark that $\mathcal{F}^{\infty}_y$ is the whole $D_y$ space, $\R^2$.

\subsubsection{In-plane}

The in-plane state increment is defined as $\Delta D_{xz}=D^+_{xz}-D_{xz}=[\Delta d_0,~\Delta d_1,~\Delta d_2,~\Delta d_3]^T$. 
As the periodicity tracking strategy is applied (see \eqref{eq:Dxz+}), then the in-plane state increment is given by
\begin{equation}
\Delta D_{xz}(\nu,\lambda_{xz},D_{xz})=B_{D,xz}(\nu)\left(\lambda_{xz}B^\bot_{d_0,xz}(\nu)+\Delta V^{0}_{xz}(d_0,\nu)\right),\label{eq:inc_inplane}
\end{equation}
where it should be noted that $\Delta D_{xz}$ depends on the actual state due to $d_0$. As a consequence, the 2-$\pi$ period in-plane reachable state (in the $\Delta D_{xz}$ space) is
\begin{equation}
\mathcal{F}_{xz}(\Lambda) :=\{\Delta D_{xz} \in \R^4: \Delta D_{xz}~\text{s.t.}~\text{Eq.}\eqref{eq:inc_inplane},~\forall \lambda_{xz} \in \Lambda\subseteq\R\}.
\end{equation}
$\Lambda$ is again a interval set of the allowable values for the in-plane control variable, $\lambda_{xz}$. To account for dead-zone and saturation constraints, interval set $\Lambda$ is given  by \eqref{eq:Isat_xz}. Consequently, the constrained reachable set is 
\begin{equation}\label{eq:incset_reachable_xz}
\mathcal{F}_{\text{sat},xz}=\mathcal{F}_{xz}(\Lambda_{\text{sat},xz}(d_0,\nu)).
\end{equation}
The dead-zone and unconstrained reachable sets are also defined respectively as
\begin{equation}\label{eq:incset_dead_reachable_xz}
\mathcal{F}_{\text{dz},xz}=\mathcal{F}_{xz}([\underline{\lambda}_{xz,1}(d_0,\nu),~\underline{\lambda}_{xz,2}(d_0,\nu)]),\quad \mathcal{F}^{\infty}_{xz}=\mathcal{F}_{xz}(\R).
\end{equation}
Note that $\Delta d_0=-d_0$ due to the periodicity tracking strategy.
This fact makes the $d_1d_2d_3$ space the relevant one when applying the in-plane impulse. 
A further geometric description can be made under the quasi-steady assumption. 
Assuming that $|d_0|\approx0$, the part of the control impulse that compensate for $d_0$ can be neglected as $\lVert\lambda_{xz}B^\bot_{d_0,xz}\rVert_2\gg\lVert\Delta V^{0}_{xz}\rVert_2$, then
\begin{equation}\label{eq:incset_reachable_xz1}
\Delta D_{xz}(\nu,\lambda_{xz})\approx \lambda_{xz}B_{D,xz}(\nu)B^\bot_{d_0,xz}(\nu)=\frac{\lambda_{xz}}{k^2\rho}[0,~-s_{\nu},~c_{\nu},~2+ec_{\nu}]^T.%\label{eq:incset_xz}.
\end{equation}
\eqref{eq:incset_reachable_xz1} can be implicitized with respect to $\nu$ and $\lambda_{xz}$ by means of a Groebner basis (see \cite{Fix1995}) to obtain the implicit equation that describe the state increment surface:%,  (note that a surface is parameterized by two independent variables) 
\begin{equation}
f_{xz}(\Delta D_{xz})=4\Delta d^2_1+(4-e^2)\Delta d_2^2+2e\Delta d_2\Delta d_3-d_3^2=0,\label{eq:incd1d2d3_cone}.
\end{equation} 
Equation \eqref{eq:incd1d2d3_cone} represents a conic surface in the $\Delta d_1 \Delta d_2 \Delta d_3$ space being the $\Delta d_3$ axis the apex if $e=0$. 
It provides an approximation for the in-plane unconstrained reachable set over a 2-$\pi$ period,
\begin{equation}
\mathcal{F}^{\infty}_{xz}\approx(\Delta D_{xz}\in\R^4:f_{xz}(\Delta D_{xz})=0).
\end{equation}
Since the control variable $\lambda_{xz}$ was lost due to implicitization, this analytical description can not be extended to account for dead-zone and saturation constraints. But, as matter of fact, $\mathcal{F}_{\text{sat},xz}$ and $\mathcal{F}_{\text{dz},xz}$ are sections of the $\mathcal{F}^{\infty}_{xz}$.

\subsection{Dead-zone set}\label{DeadzoneSet}

The dead-zone set $\mathcal{D}_{\text{dz}}=\mathcal{D}_{\text{dz},xz}\times\mathcal{D}_{\text{dz},y}$ is defined as the set of states from where all the $\adset$ reachability opportunities over a 2-$\pi$ period fall within the dead-zone threshold. 
This set is of particular because, whenever $D\in\mathcal{D}_{\text{dz}}$, the global controller is called (see Algorithm \ref{alg1}). As it is shown in the sequel, this set may or may not exist depending on the conditions developed thereafter.

\subsubsection{Out-of-plane}

Using a formal notation, the out-of-plane dead-zone set is defined as
\begin{eqnarray}
\mathcal{D}_{\text{dz},y}&:=&\left\{D_y\in\R^2 : D_y\notin \adsety,~~D_y\notin\mathcal{D}_y,~~ (D_y\oplus\mathcal F_{\text{dz},y})\cap\adsety \neq \emptyset\right\},\\
&:=&\left\{ D_y\in\R^2: D_y\notin\adsety,~~ (D_y\oplus \mathcal F_{\text{sat},y})\cap \adsety = \emptyset,~~  (D_y\oplus\mathcal F_{\text{dz},y}\cap\adsety \neq \emptyset  \right\}.
\end{eqnarray}
Following \cite{Sanchez2019_bis}, the contractive set $\mathcal{D}_y$ can be expressed in terms of the following Minkowski sum
\begin{equation}\label{eq:cont_set_outplane_Minkowski}
\mathcal{D}_y=\adsety\oplus\mathcal{F}_{\text{sat},y}.
\end{equation}
For simplicity, the summation has been considered since the out-of-plane state increment reachable set over one period, see Eq.\eqref{eq:incset_reachable_y}, does not depend on the current state $D_y$. Since $\mathcal{F}_{\text{sat},y}$ is the covered region between two ellipses (which is closed), it is deduced that the out-of-plane dead-zone set only exist, $\mathcal{D}_{\text{dz},y}\neq\emptyset$, if $\adsety\subset\mathcal{F}_{\text{dz},y}$ by convexity of Eq\eqref{eq:cont_set_outplane_Minkowski} Minkowski sum. This is an important conclusion since only the minimum impulse bit (which depends on the thruster capabilities) could degrade the out-of-plane event-based controller behavior. For example, consider the case where $e=0$, then both $\adsety$ and the interior region of $\mathcal{F}_y$ are circles of radius $\text{max}\{|\ymin|,|\ymax|\}$ and $\underline{\Delta V}/k^2$ respectively. It can be easily seen that $\adsety\subset\mathcal{F}_{\text{dz},y}$ if $\underline{\Delta V}> k^2\text{max}\{|\ymin|,|\ymax|\}$. Figure \ref{fig:state_outplane_reachability} shows a case where the out-of-plane state is within the contractive set and another case where the out-of-plane state is within the dead-zone set (thus, $\adsety$ is unreachable over the 2-$\pi$ period). Figure \ref{fig:out_of_plane_invariance}(a), presents the nominal scenario (simulated in Section \ref{Results}) where the dead-zone set vanishes ($\mathcal{D}_{\text{dz},y}=\emptyset$). Figure \ref{fig:out_of_plane_invariance}(b) shows a case where the dead-zone value is augmented considerably, thus the dead-zone set, $\mathcal{D}_{\text{dz},y}$ has a relevant size.

\begin{figure}[] 
	\begin{center}
		\subfigure[]{\includegraphics[width=8cm,height=8cm,keepaspectratio]{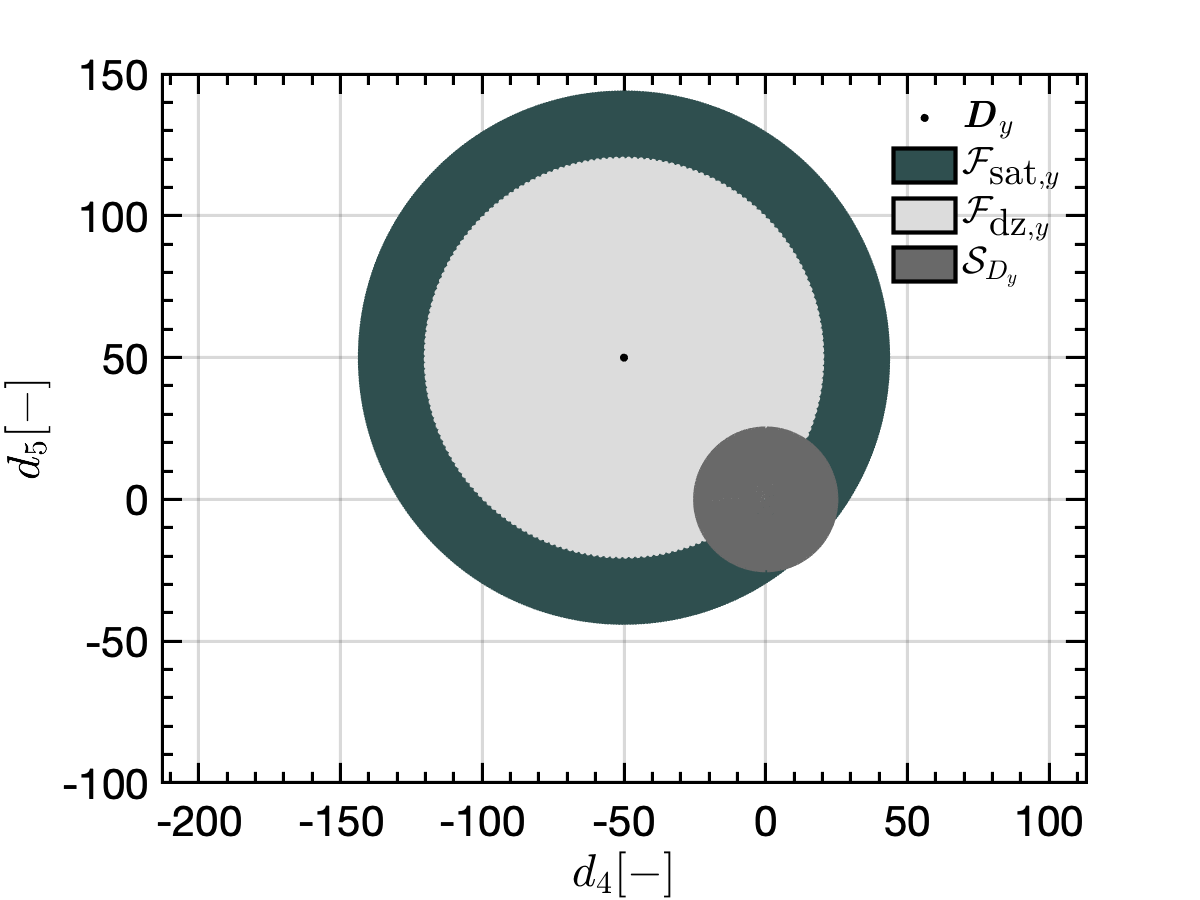} \label{subfig:state_outplane_mathcalD}}
		\subfigure[]{\includegraphics[width=8cm,height=8cm,keepaspectratio]{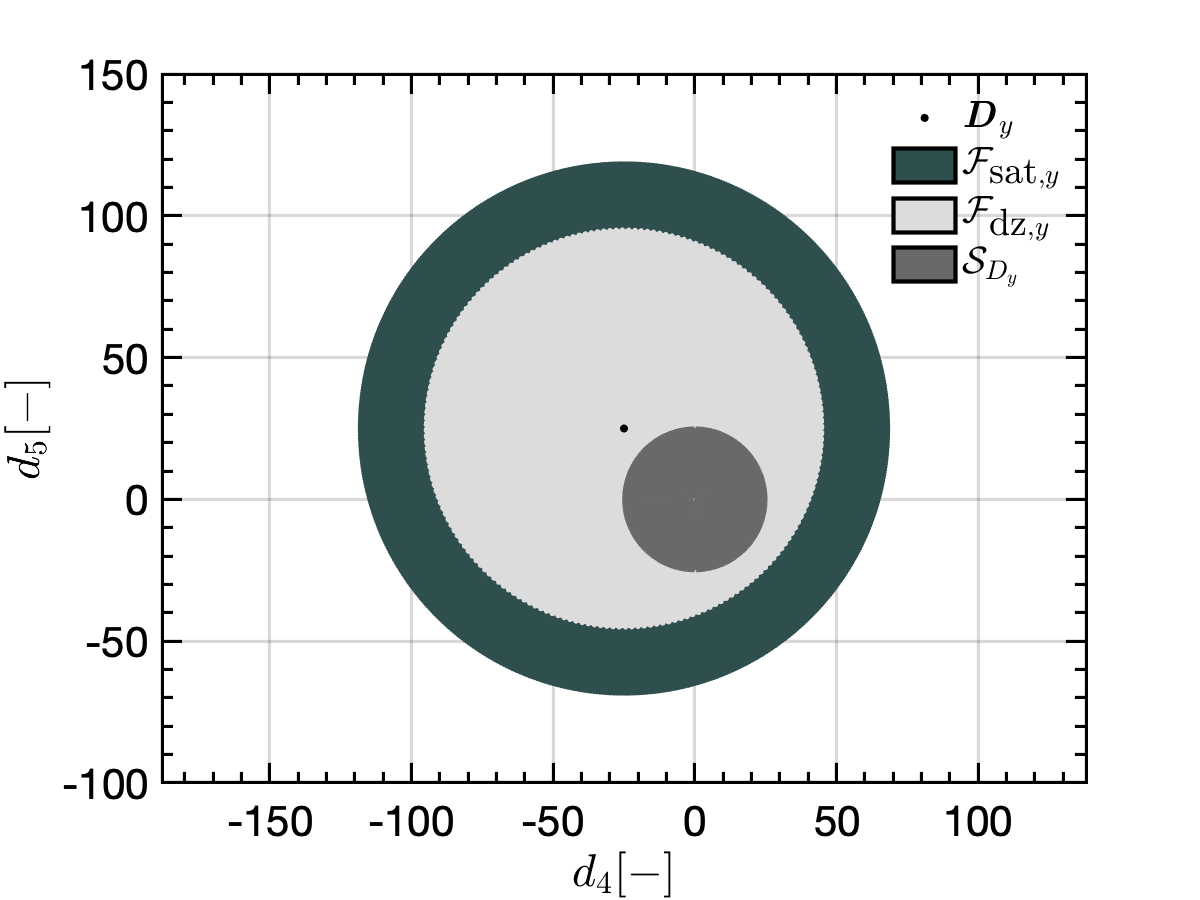}\label{subfig:state_outplane_dead-zone}}\qquad
		\caption{$\mathcal{F}_{\text{sat},y}$, $\mathcal{F}_{\text{dz},y}$ and $\adsety$ for $a=7011~\text{km}$ $e=0.004$, $\ymax=-\ymin=25~\text{m}$, $\underline{\Delta V}=7.5\cdot10^{-2}~\text{m/s}$, $\overline{\Delta V}=0.1~\text{m/s}$, (a): $D_y=[-50,~50]^T\in\mathcal{D}_y$, (b): $D_y=[-25,~25]^T\in\mathcal{D}_{\text{dz},y}$.}
		\label{fig:state_outplane_reachability}
	\end{center}
\end{figure}

\begin{figure}[] 
	\begin{center}
	\subfigure[]{\includegraphics[width=8cm,height=8cm,keepaspectratio]{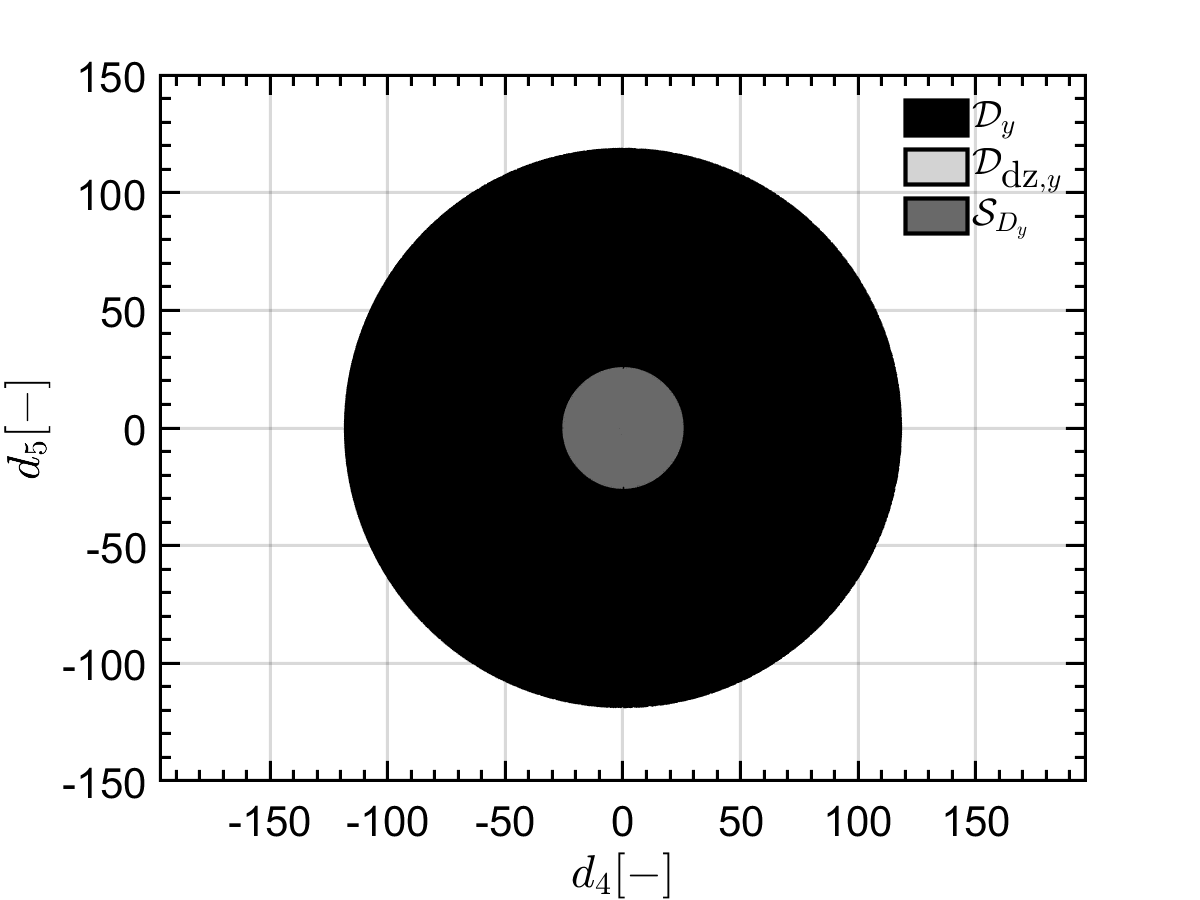} \label{subfig:nominal_out_of_plane}}
	\subfigure[]{\includegraphics[width=8cm,height=8cm,keepaspectratio]{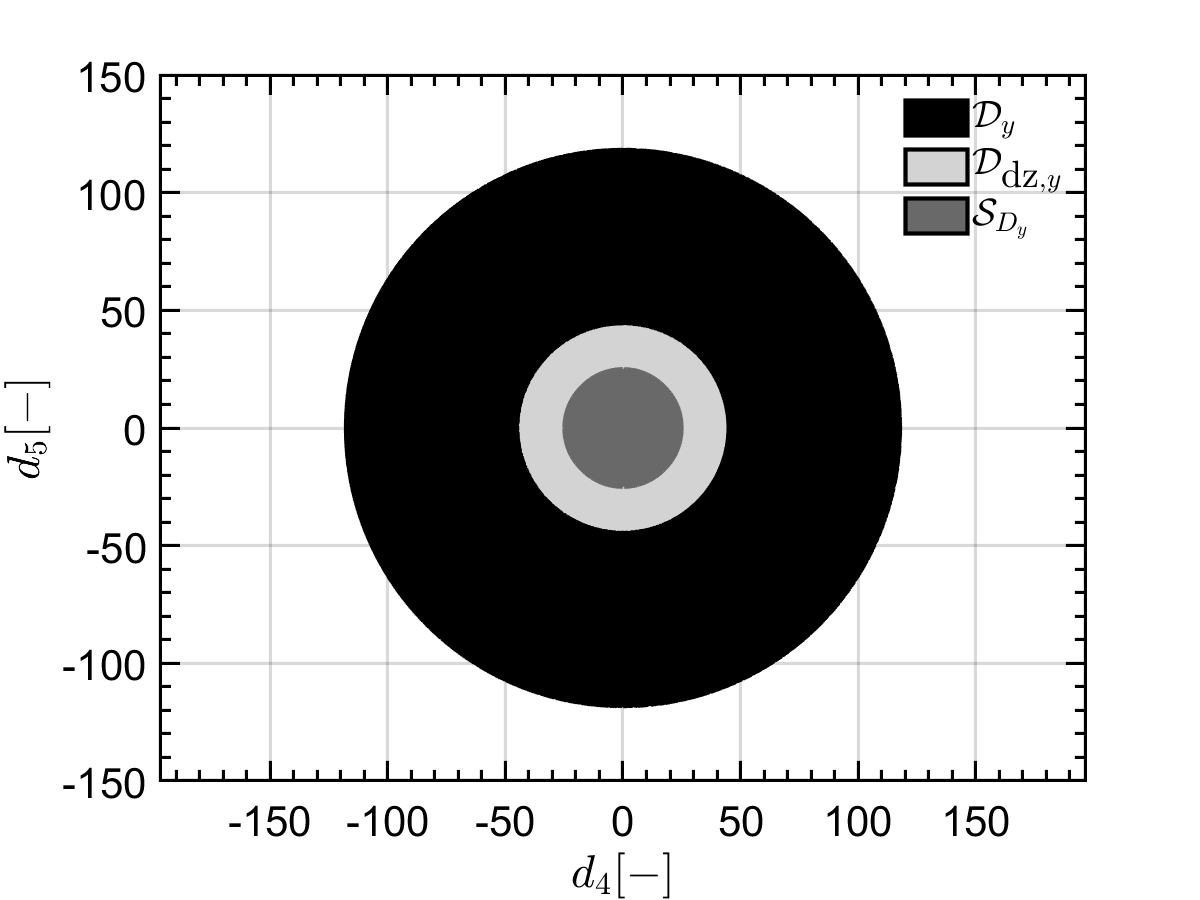} \label{subfig:extreme_out_of_plane}}
	\caption{$\mathcal{D}_{y}$, $\mathcal{D}_{\text{dz},y}$ and $\adsety$  for $a=7011~\text{km}$ $e=0.004$, $\ymax=-\ymin=25~\text{m}$, $\overline{\Delta V}=0.1~\text{m/s}$, (a): $\underline{\Delta V}=10^{-3}~\text{m/s}$, (b): $\underline{\Delta V}=7.5\cdot10^{-2}~\text{m/s}$.}
	\label{fig:out_of_plane_invariance}
	\end{center}
\end{figure}

\subsubsection{In-plane}

Using the previous notation, the in-plane dead-zone set is defined as
\begin{eqnarray}
\mathcal{D}_{\text{dz},xz}&:=&\left\{D_{xz}\in\R^2 : D_{xz}\notin \adsetxz,~~D_{xz}\notin\mathcal{D}_{xz},~~ (D_{xz}\oplus\mathcal F_{\text{dz},xz})\cap\adsetxz \neq \emptyset\right\},\\
&:=&\left\{D_{xz}\in\R^2: D_{xz}\notin\adsetxz,~~ (D_{xz}\oplus \mathcal F_{\text{sat},xz})\cap \adsetxz = \emptyset,~~  (D_{xz}\oplus\mathcal F_{\text{dz},xz}\cap\adsetxz \neq \emptyset\right\}.
\end{eqnarray}
Under the quasi-steady assumption, the in-plane contractive set $\mathcal{D}_{xz}$ can be approximated as
\begin{equation}
\mathcal{D}_{xz}\approx\adsetxz\oplus\mathcal{F}_{\text{sat},xz}~~\text{if}~~|d_0|\approx0.
\end{equation}
Note that the in-plane state increment reachable set independent of the current state. 
However, in this case, the Minkowski sum is composed of a convex closed set, $\adsetxz$, with some sections of the conic surface (due to dead-zone and saturation) given by Eq.\eqref{eq:incd1d2d3_cone}. Since a cone is an open surface, no conclusions can be yielded about possible event-based controller degradation causes. This degradation could be caused due to a combination of the dead-zone threshold and the problem topology. Figure \ref{fig:state_inplane_reachability} shows a case where the in-plane state is within the in-plane contractive set $\mathcal{D}_{xz}$ (\figref{subfig:state_inplane_mathcalD}) and another situation where the state is within the in-plane dead-zone set $\mathcal{D}_{\text{dz},xz}$ (\figref{subfig:state_inplane_dead-zone}). Figure \ref{fig:in_plane_invariance} show a case where the dead-zone does not exist \figref{subfig:in_plane_invariance_satisfaction} and another where it exists \figref{subfig:in_plane_invariance_violation}.

\begin{figure}[] 
	\begin{center}
		\subfigure[]{\includegraphics[width=8cm,height=8cm,keepaspectratio]{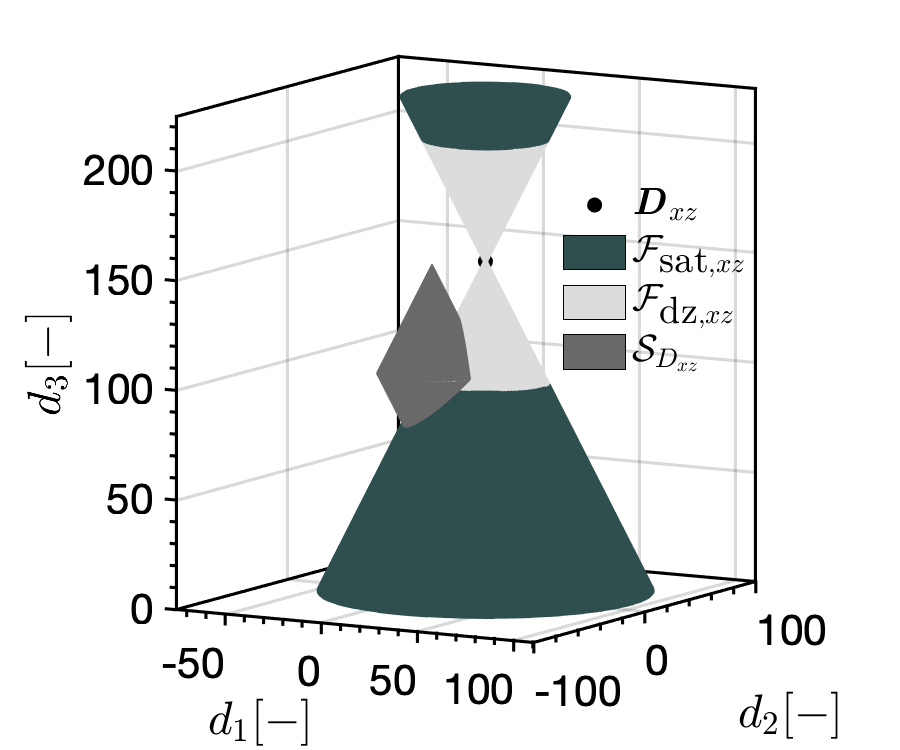} \label{subfig:state_inplane_mathcalD}}
		\subfigure[]{\includegraphics[width=8cm,height=8cm,keepaspectratio]{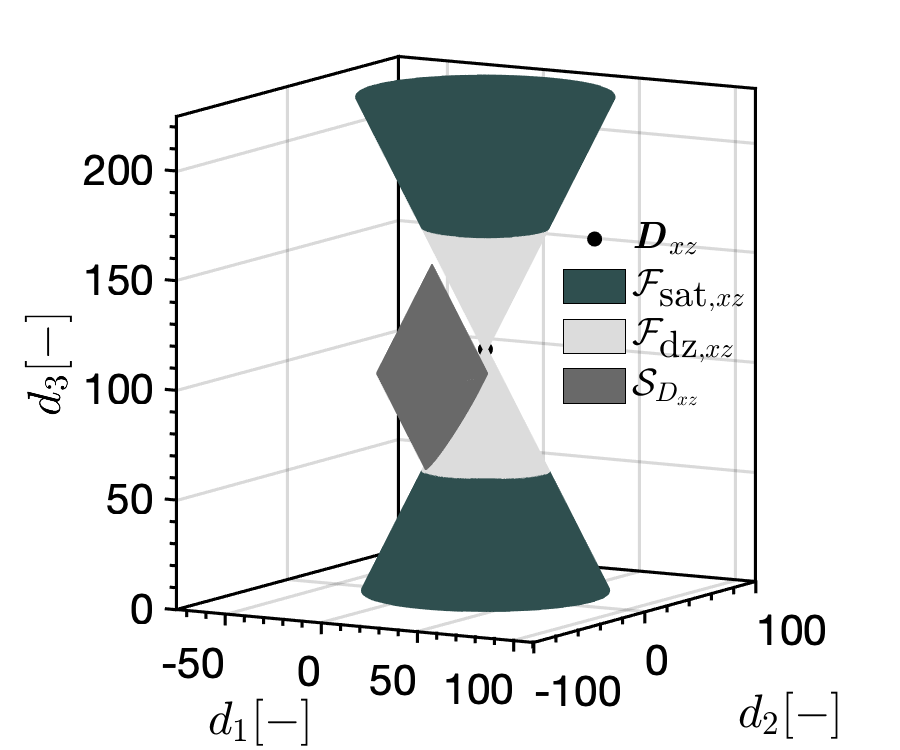}\label{subfig:state_inplane_dead-zone}}\qquad
		\caption{$\mathcal{F}_{\text{sat},xz}$, $\mathcal{F}_{\text{dz},xz}$ and $\adsetxz$ for $a=7011~\text{km}$, $e=0.004$, $\ymax=-\ymin=25~\text{m}$, $\underline{\Delta V}=3\cdot10^{-2}~\text{m/s}$, $\overline{\Delta V}=0.1~\text{m/s}$, (a): $D_{xz}=[0,~17.5,~17.5,~150]^T\in\mathcal{D}_{xz}$, (b): $D_{xz}=[0,~17.5,~17.5,~110]^T\in\mathcal{D}_{\text{dz},xz}$.}
		\label{fig:state_inplane_reachability}
	\end{center}
\end{figure}

\begin{figure}[] 
	\begin{center}
		\subfigure[]{\includegraphics[width=8cm,height=8cm,keepaspectratio]{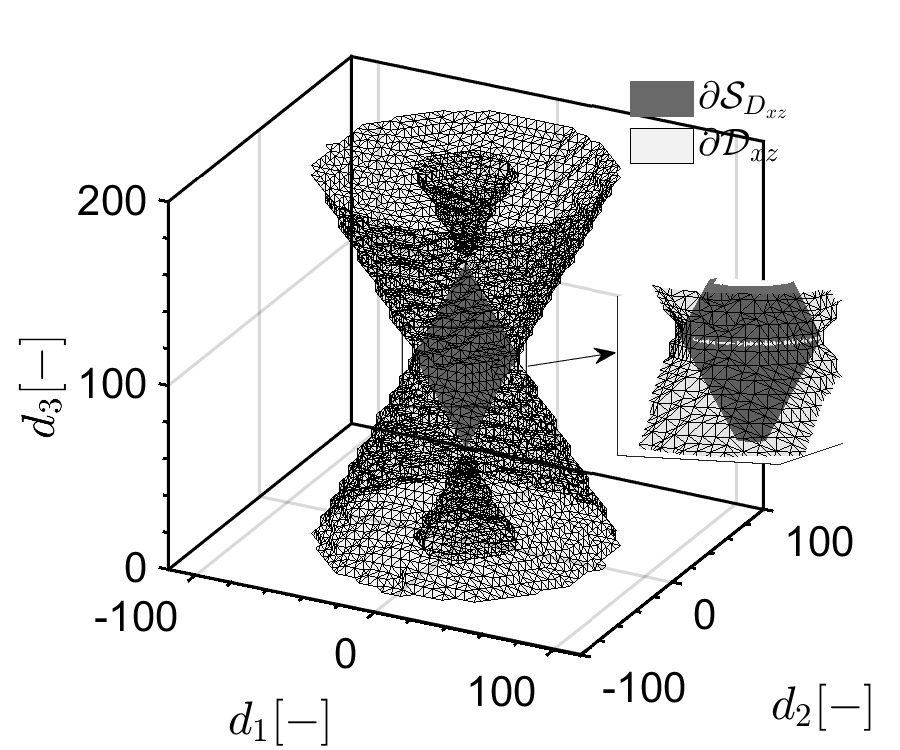} \label{subfig:in_plane_invariance_satisfaction}}
		\subfigure[]{\includegraphics[width=8cm,height=8cm,keepaspectratio]{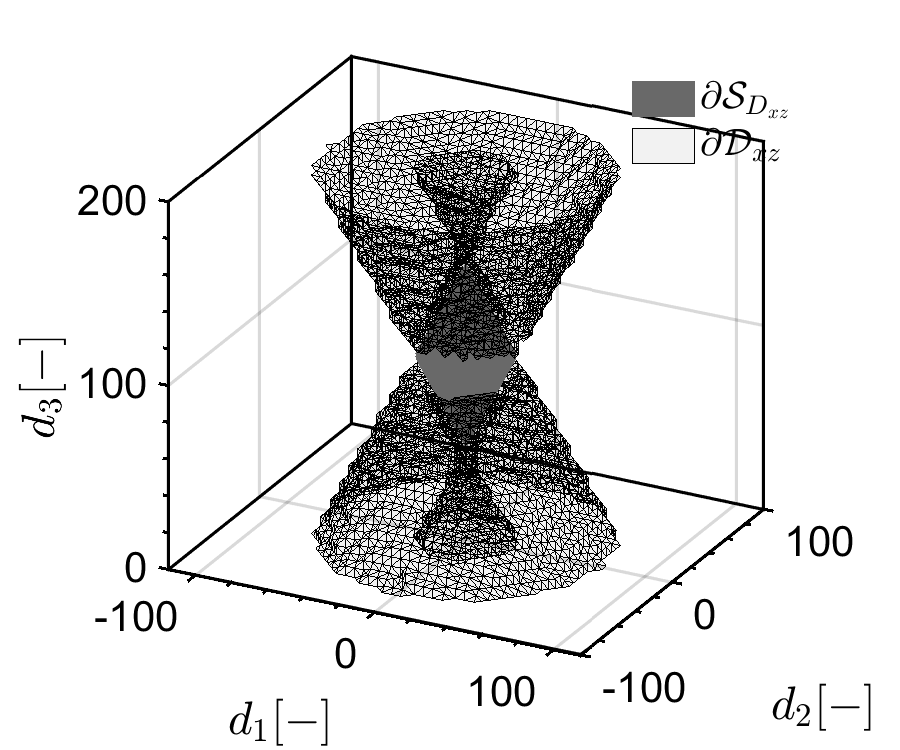} \label{subfig:in_plane_invariance_violation}}
		\caption{$\partial\mathcal{D}_{xz}$ and $\partial\adsetxz$ for $a=7011~\text{km}$ $e=0.004$, $\xmax=150~\text{m}$, $\xmin=50~\text{m}$, $\zmax=-\zmin=25~\text{m}$, $\overline{\Delta V}=0.1~\text{m/s}$, (a): $\overline{\Delta V}=10^{-3}~\text{m/s}$, (b): $\overline{\Delta V}=3\cdot10^{-2}~\text{m/s}$.}
		\label{fig:in_plane_invariance}
	\end{center}
\end{figure} 

\end{document}